\title{Reliability on QR Codes and Reed-Solomon Codes}
\newcolumntype{L}{>{$}l<{$}}
\newcolumntype{R}{>{$}r<{$}}
\newcolumntype{C}{>{$}c<{$}}
\date{}
\newif\ifuniqueAffiliation
\author{ \href{https://orcid.org/0009-0008-6032-8223}{\includegraphics[scale=0.06]{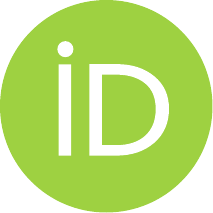}\hspace{1mm}Bhavuk Sikka Bajaj}\thanks{Learn more at \url{https://github.com/Bubbasm/}} \\
	Canary Islands, Spain \\
	\texttt{bsikka100@gmail.com} \\
}
\newbox{\orcid}\sbox{\orcid}{\includegraphics[scale=0.06]{orcid.pdf}} 
\author[1]{%
	\href{https://orcid.org/0000-0000-0000-0000}{\usebox{\orcid}\hspace{1mm}David S.~Hippocampus\thanks{\texttt{hippo@cs.cranberry-lemon.edu}}}%
}
\author[1,2]{%
	\href{https://orcid.org/0000-0000-0000-0000}{\usebox{\orcid}\hspace{1mm}Elias D.~Striatum\thanks{\texttt{stariate@ee.mount-sheikh.edu}}}%
}
\affil[1]{Department of Computer Science, Cranberry-Lemon University, Pittsburgh, PA 15213}
\affil[2]{Department of Electrical Engineering, Mount-Sheikh University, Santa Narimana, Levand}
\newtheorem{theorem}{Theorem}[section]
\newtheorem{proposition}[theorem]{Proposition}
\theoremstyle{definition}
\newtheorem*{remark}{Remark}
\newtheorem{definition}[theorem]{Definition}
\newtheorem{example}[theorem]{Example}
\begin{document}
\maketitle

\begin{abstract}
This study addresses the use of Reed-Solomon error correction codes in QR codes to enhance resilience against failures. To fully grasp this approach, a basic cryptographic context is provided, necessary for understanding Reed-Solomon codes. The study begins by defining a code and explores key outcomes for codes with additional properties, such as linearity. The theoretical framework is further developed with specific definitions and examples of Reed-Solomon codes, presented as a particular variant of BCH codes.

Additionally, the structure of QR codes is analyzed, encompassing different versions and how data is represented in the form of black and white pixels within an image.

Finally, an inherent vulnerability of Reed-Solomon Codes, and particularly of QR codes, related to selective manipulation of modules is examined. This vulnerability leverages the error correction mechanisms present in Reed-Solomon codes.
\end{abstract}

\keywords{QR codes \and Reed-Solomon codes \and Information manipulation}

\section{Introduction}
The present work focuses on a critical aspect of QR code (Quick Response) technology, a widely used tool today for storing and transmitting information. QR codes, consisting of a two-dimensional array of modules (square pixels), have become ubiquitous in our daily lives. They are used in various applications ranging from advertising to inventory management, product authentication, and access to digital content.

The fundamental motivation behind this project lies in the need to thoroughly understand the potential security risks associated with their use. In particular, we will focus on the possibility that these codes may be vulnerable to malicious manipulation, where an attacker, by modifying just a few pixels in the image of a QR code, could inadvertently change its original message. This motivation is based on the premise that the integrity of the information contained in a QR code is essential in many of the applications where they are used, from e-commerce to healthcare and identity authentication.

\section{Error correcting codes}
To store or transmit information securely without losing important messages, it is necessary to have a method for detecting and correcting errors that may occur during the process. All available communication channels have some degree of noise or interference, such as wear and tear on the image of a QR code or a scratch on a CD, and these factors must be considered when transmitting information.

In this chapter, we will go through the mathematical theory behind error correcting codes. It is important to note that all the theoretical results discussed in this chapter are already present in the academic literature. Our objective is simply to compile and explain these results to facilitate understanding, and then to explore the error resilience in Reed-Solomon codes. If you are already familiar with Reed-Solomon codes, you may proceed directly to Chapter \ref{chap:qrstructure}.

\subsection{Basic Concepts}

A message must be encoded in a predefined alphabet. We can denote this alphabet as a finite set $\mathcal{A} = \{a_1, ..., a_q\}$. Given the alphabet $\mathcal{A}$, we denote by $\mathcal{A}^n$ the set of words of length $n$, and by $\mathcal{A}^* = \bigcupdot_{n \in \mathbb{N}} \mathcal{A}^n$, the set of words from $\mathcal{A}$.

With these concepts, we can define a code $\mathcal{C}$ as a subset of $\mathcal{A}^*$, and a block code $\mathcal{C}_b$ as a subset of $\mathcal{A}^n$, for some $n \in \mathbb{N}$. The elements of a code $\mathcal{C}$ are called the codewords of $\mathcal{C}$.

\begin{example}\phantom{}
\begin{enumerate}
\item Let $\mathcal{A} = \{0, 1\}$. We define $\mathcal{A}^3$ as the binary numbers of 3 bits. We consider $\mathcal{C}_b \subseteq \mathcal{A}^3$, constructed as $\mathcal{C}_b = \{(x_1, x_2, x_3) \:|\: x_1 + x_2 + x_3 \equiv 0 \text{ (mod 2)}\}$. The subset $\mathcal{C}_b$ is a block code.
\item Let $\mathcal{A} = $ the Spanish alphabet, $q = 27$, $\mathcal{C} \subseteq \mathcal{A}^*$, $\mathcal{C} = $ words in Spanish. The subset $\mathcal{C}$ is a code, but not a block code.
\end{enumerate}

\end{example}\medskip

From this point onward, the term `code' will specifically refer to `block code', as these will be of particular interest.

The case of the alphabet $\mathcal{A} = \mathbb{F}_q$, with $q = p^m$, $p$ prime and $m \in \mathbb{N}$, is of special interest since tools from linear algebra and number theory can be applied to design codes with desirable properties, which we will discuss later.
\medskip
\begin{definition}\textbf{Minimum Distance}

Let $\mathcal{A} = \mathbb{F}_q$ be a finite field, with $q = p^m$, $p$ prime and $m \in \mathbb{N}$. The \textit{Hamming distance} between two vectors $u, v \in \mathbb{F}_q^n$ is the number of positions in which the vectors differ, and it can be expressed as the cardinality of the set $\{i \in {1, ..., n} \:|\: u(i) \neq v(i)\}$:
\begin{align*}
d(u, v) = |\{i \in {1, ..., n} \:|\: u(i) \neq v(i)\}|.
\end{align*}
The \textit{minimum distance} of a code $\mathcal{C} \subseteq \mathbb{F}_q^n$ is:
\begin{align*}
d(\mathcal{C}) = \min\{d(u, v) \:|\: u, v \in \mathcal{C}, u \neq v\}.
\end{align*}
\end{definition}\medskip

\begin{definition}\textbf{$[n, M, d]_q$-code.}

We say that $\mathcal{C}$ is an $[n, M, d]_q$-code if $\mathcal{C} \subseteq \mathcal{A}^n$, $q = |\mathcal{A}|$, $M = |\mathcal{C}|$, and with minimum distance $d(\mathcal{C}) = d$.
\end{definition}\medskip

\begin{definition}\textbf{Linear Code}

We say that $\mathcal{C}$ is a linear code if it satisfies the following properties:
\begin{enumerate}
    \item $\mathcal{A} = \mathbb{F}_q$, with $q = p^m$, $p$ prime and $m \in \mathbb{N}$
    \item $\mathcal{C}$ is a subspace vector of $\mathbb{F}_q^n$.
\end{enumerate}

In this case, $\mathcal{C}$ is defined as an $[n, k, d]_q$-code, where $k$ is such that $q^k = |\mathcal{C}|$ and with minimum distance $d(\mathcal{C}) = d$.
\end{definition}\medskip

\begin{definition}\textbf{Weight of a Code}

For $v \in \mathbb{F}_q^n$, we define the weight of $v$ as $w(v) = d(v, (0, ..., 0)) = $ the number of nonzero digits.

We define the weight of the code $\mathcal{C} \subseteq \mathbb{F}_q^n$ as $w(\mathcal{C}) = \min\{ w(v) \:|\: v \in \mathcal{C}, v \neq 0\}$.
\end{definition}\medskip

In the case of linear codes, there is a clear relationship between the Hamming distance and the weight of a code.

\begin{theorem}
If $\mathcal{C} \subseteq \mathbb{F}_q^n$ is a linear code, then $d(\mathcal{C}) = w(\mathcal{C})$.

\end{theorem}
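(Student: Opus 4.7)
The plan is to prove the two inequalities $d(\mathcal{C}) \leq w(\mathcal{C})$ and $w(\mathcal{C}) \leq d(\mathcal{C})$ separately, using crucially the fact that linearity gives us $0 \in \mathcal{C}$ and closure under subtraction. The underlying observation, which I would isolate as a small preliminary remark, is that Hamming distance is translation invariant in the sense that $d(u,v) = w(u-v)$ for any $u,v \in \mathbb{F}_q^n$: the coordinates where $u$ and $v$ differ are exactly the coordinates where $u-v$ is nonzero, so both quantities count the same set of indices.

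First I would establish $d(\mathcal{C}) \leq w(\mathcal{C})$. Let $v \in \mathcal{C}$ with $v \neq 0$ be a codeword realizing the minimum weight, so $w(v) = w(\mathcal{C})$. Because $\mathcal{C}$ is a subspace of $\mathbb{F}_q^n$, the zero vector belongs to $\mathcal{C}$, and $v \neq 0$. Thus $v$ and $0$ are two distinct codewords, so by the definition of minimum distance $d(\mathcal{C}) \leq d(v,0) = w(v) = w(\mathcal{C})$.

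Next I would establish the reverse inequality $w(\mathcal{C}) \leq d(\mathcal{C})$. Pick $u,v \in \mathcal{C}$ with $u \neq v$ achieving $d(u,v) = d(\mathcal{C})$. Linearity of $\mathcal{C}$ gives $u - v \in \mathcal{C}$, and since $u \neq v$ we have $u - v \neq 0$, so $u-v$ is a nonzero codeword eligible for the minimum in the definition of $w(\mathcal{C})$. Applying the translation-invariance observation, $w(u-v) = d(u-v, 0) = d(u,v) = d(\mathcal{C})$, and hence $w(\mathcal{C}) \leq w(u-v) = d(\mathcal{C})$. Combining both inequalities yields $d(\mathcal{C}) = w(\mathcal{C})$.

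There is no real obstacle here: the whole argument rests on the simple but essential fact that linearity lets us convert between pairs of codewords and single nonzero codewords via the map $(u,v) \mapsto u-v$. The only point deserving care is to verify that the translation-invariance identity $d(u,v) = w(u-v)$ holds over an arbitrary finite field $\mathbb{F}_q$, which is immediate from the fact that $u(i) \neq v(i)$ if and only if $(u-v)(i) \neq 0$.
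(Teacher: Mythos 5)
Your proof is correct and rests on the same key facts as the paper's own argument: the identity $d(u,v) = w(u-v)$ together with the observation that linearity lets $u-v$ range over all nonzero codewords (and that $0 \in \mathcal{C}$). Splitting the equality into two inequalities is just a more explicit rendering of the paper's one-line identification of the two minima, so the approaches are essentially identical.
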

\begin{proof}
If $u, v \in \mathcal{C}$, then $d(u, v) = w(u - v)$, since $u - v$ has nonzero digits exactly at the positions where $u$ and $v$ differ. Moreover, $u - v$ covers the entire code $\mathcal{C}$ when $u$ and $v$ traverse $\mathcal{C}$, as $\mathcal{C}$ is a subspace vector of $\mathbb{F}_q^n$.

Therefore, $d(\mathcal{C}) = \min\{ d(x, y) \:|\: x, y \in \mathcal{C}, x \neq y \} = \min\{ w(x) \:|\: x \in \mathcal{C}, x \neq 0 \} = w(\mathcal{C})$. \qedhere
\end{proof}\medskip

A good $[n, k, d]_q$-code $\mathcal{C}$ is one that can represent a lot of information with short words while being capable of detecting and correcting multiple errors due to its high minimum distance between words. This means it has the following characteristics:
\begin{itemize}
\item \textbf{Small $n$}: The total number of symbols in the words is preferably small.
\item \textbf{Large $k$}: The number of valid words is large.
\item \textbf{Large $d$}: The minimum distance between words is large. This indicates that the code can correct a higher number of errors in the data.
\end{itemize}

\medskip
\begin{example}\label{example}The following is an example of an algorithm for error detection. Let $\mathcal{C}$ be an arbitrary $[n, M, d]_q$-code.

The message $c_1 \in \mathcal{C}$ is transmitted, and $c_2 \in \mathbb{F}_q^n$ is received:
\begin{itemize}[topsep=1ex]
\item If $c_2 \notin \mathcal{C}$, request retransmission of the message.
\item If $c_2 \in \mathcal{C}$, accept $c_2$ as the valid message.
\end{itemize}
If $c_2 \neq c_1$, and $c_2 \in \mathcal{C}$, at least $d$ errors must have occurred. Therefore, we can detect at least $d - 1$ errors. However, this algorithm does not correct any errors.
\end{example}\medskip

Next, we will see a result that relates the existence of an algorithm that corrects and detects errors with the minimum distance of a code.

\renewcommand{\labelenumi}{$\theenumi$.}
\begin{theorem}\label{theorem:algoritmo}
Let $\mathcal{C}$ be an $[n, M, d]_q$-code. Then, the following are equivalent:
\begin{enumerate}
    \item The minimum distance is such that $d \geq 2t + s + 1$.
    \item There exists an algorithm that detects $t+s$ errors and corrects $t$ errors.
\end{enumerate}
\end{theorem}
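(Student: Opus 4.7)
The plan is to establish the two implications separately, in both cases exploiting Hamming balls around codewords together with the triangle inequality for the Hamming distance.

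For $(1) \Rightarrow (2)$, I would construct an explicit algorithm and verify that it meets the contract. Given a received word $y \in \mathbb{F}_q^n$, the algorithm searches for a codeword $c \in \mathcal{C}$ with $d(y,c)\leq t$; if such a $c$ is found it is output (correction), otherwise $y$ is flagged as erroneous (detection). Two facts then need to be checked under the hypothesis $d \geq 2t+s+1$. First, uniqueness: if $c,c' \in \mathcal{C}$ both lie within distance $t$ of $y$, the triangle inequality gives $d(c,c') \leq 2t < d$, forcing $c = c'$, so the output is well defined. Second, correctness of behaviour: if the transmitted word is $c_1$ and at most $t$ errors occur then $c_1$ itself is within distance $t$ of $y$ and is returned; while if between $t+1$ and $t+s$ errors occur, no $c_2 \in \mathcal{C}$ can be within distance $t$ of $y$, since otherwise $d(c_1,c_2) \leq d(c_1,y) + d(y,c_2) \leq (t+s) + t = 2t+s < d$, a contradiction, so the algorithm detects.

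For $(2) \Rightarrow (1)$, I would argue by contrapositive: assuming $d \leq 2t+s$, I would exhibit a received word that no correcting/detecting algorithm can handle consistently. Pick $c_1 \neq c_2$ in $\mathcal{C}$ with $d' := d(c_1,c_2) \leq 2t+s$. The boundary case $d' \leq t$ is dispatched by taking $y = c_2$: transmitting $c_1$ with $d'$ errors yields $y$ and demands output $c_1$ (correction), whereas transmitting $c_2$ with no errors yields the same $y$ and demands output $c_2$, an impossible pair of requirements. In the main case $t < d' \leq 2t+s$, let $D$ be the set of positions in which $c_1$ and $c_2$ differ, choose any $E \subseteq D$ with $|E|=t$, and define $y$ to agree with $c_2$ on $E$ and with $c_1$ elsewhere. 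Then $d(y,c_1)=t$ and $d(y,c_2)=d'-t \leq t+s$, so $y$ arises both from $c_1$ with exactly $t$ errors (forcing the algorithm to output $c_1$) and from $c_2$ with at most $t+s$ errors (forcing the algorithm to either output $c_2$ or flag $y$); no single response to $y$ can satisfy both demands.

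The main obstacle I anticipate is the converse direction: the forward implication is essentially a triangle-inequality bookkeeping on Hamming balls, whereas for the converse the ambiguous word $y$ must be constructed explicitly, and the trivial but genuinely distinct case $d' \leq t$ must not be overlooked. A secondary subtlety is turning the informal phrase ``detects $t+s$ errors and corrects $t$ errors'' into a precise functional contract on the algorithm (it must output the transmitted codeword whenever at most $t$ errors occur and must not silently output an incorrect codeword when between $t+1$ and $t+s$ errors occur), so that the contradiction in the converse is airtight.
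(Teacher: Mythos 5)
Your proposal is correct and follows essentially the same route as the paper: the forward direction uses the nearest-codeword (radius-$t$ ball) algorithm with the same triangle-inequality bookkeeping, and the converse constructs an ambiguous received word interpolated between two codewords at minimum distance. If anything, your converse is slightly more careful than the paper's (you justify the existence of the interpolated word $y$ explicitly, place it at distance exactly $t$ from $c_1$ so that $d(y,c_2)=d'-t$ is always nonnegative, and handle the degenerate case $d'\leq t$ separately), but the underlying idea is the same.
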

\begin{proof}\phantom{}

\boxed{1 \Rightarrow 2} 
Assume $d \geq 2t+s+1$. Let's create an algorithm that corrects $t$ errors and detects $t+s$ errors.
\begin{itemize}
    \item Send $c_1\in\mathcal{C}$, and receive $c_2\in\mathbb{F}_q^n$.
    \item Let $c_3\in\mathcal{C}$ be the word closest to $c_2$: $d(c_2, c_3) = \min_{x\in\mathcal{C}} d(c_2, x)$.
    \item If $d(c_2, c_3) \leq t$, accept that $c_3$ is the correct word.
    \item If $d(c_2, c_3) > t$, request retransmission.
\end{itemize}
In this algorithm, the number of errors will be $d(c_1, c_2)$.
\begin{enumerate}
    \item If $d(c_1, c_2) \leq t$, then $d(c_1, c_3) \leq d(c_1, c_2) + d(c_2, c_3) \leq t + t = 2t < d$ $\Rightarrow$ $c_1 = c_3$ and we have correctly corrected the errors.
    \item If $t < d(c_1, c_2) \leq t+s$, then $d(c_1, c_3) > t$, because otherwise $d(c_1, c_3) \leq d(c_1, c_2) + d(c_2, c_3) \leq (t+s) + t = 2t + s < d$ $\Rightarrow$ $c_1 = c_3$. But this leads to a contradiction, since $d(c_2, c_3) = d(c_2, c_1) > t$ $\Rightarrow$ We have detected that there have been up to $t+s$ errors.
\end{enumerate}

\boxed{2 \Rightarrow 1} Suppose there is an algorithm that corrects $t$ errors and detects $t+s$ errors. Assume that $c_1 \in \mathcal{C}$ is sent and $c_2 \in \mathbb{F}_q^n$ is received with $d(c_1, c_2) = s+t$, $d(c_2, c_3) = d-(s+t)$. Let $c_3 \in \mathcal{C}$ be such that $d(c_1, c_3) = d(\mathcal{C}) = d$. If $d \leq 2t+s$, then $d(c_2, c_3) = d - (s+t) \leq 2t+s- (s+t) \leq t$. Then the algorithm would tell us that $c_3$ is the correct word, which is a contradiction, and therefore, $d \geq 2t+s+1$. \qedhere

\end{proof}\medskip

It is noted that the minimum distance in Example \ref{example} satisfies $d \geq (d - 1) + 1$, and the algorithm corrects 0 errors and detects $d - 1$ errors.

\begin{example}\phantom{}

The NIF (Spanish Tax Identification Number) has the structure $x_7x_6x_5x_4x_3x_2x_1x_0\text{-}x_r$ with $x_i \in \{0, ..., 9\}$ and $x_r$ a letter corresponding to the control digit $r$, defined as $r \equiv \sum\limits_{i=0}^{7}10^i x_i$ mod 23. 
The letter corresponding to each digit is associated according to the the table \ref{table:nif}:
\begin{table}[H]
\centering
\caption{Association of the control digit to letters.}
\setlength{\tabcolsep}{7pt}
\renewcommand{\arraystretch}{1.3}
\begin{tabular}{cccc}
\begin{tabular}[t]{|c|c|}
\hline
\textbf{r} & \textbf{Letter} \\ \hline
0  & T \\ \hline
1  & R \\ \hline
2  & W \\ \hline
3  & A \\ \hline
4  & G \\ \hline
5  & M \\ \hline
\end{tabular} &

\begin{tabular}[t]{|c|c|}
\hline
\textbf{r} & \textbf{Letter} \\ \hline
6  & Y \\ \hline
7  & F \\ \hline
8  & P \\ \hline
9  & D \\ \hline
10 & X \\ \hline
11 & B \\ \hline
\end{tabular} &

\begin{tabular}[t]{|c|c|}
\hline
\textbf{r} & \textbf{Letter} \\ \hline
12 & N \\ \hline
13 & J \\ \hline
14 & Z \\ \hline
15 & S \\ \hline
16 & Q \\ \hline
17 & V \\ \hline
\end{tabular} &

\begin{tabular}[t]{|c|c|}
\hline
\textbf{r} & \textbf{Letter} \\ \hline
18 & H \\ \hline
19 & L \\ \hline
20 & C \\ \hline
21 & K \\ \hline
22 & E \\ \hline
\end{tabular} \\
\end{tabular}
\label{table:nif}
\end{table}

\noindent For example:
\begin{enumerate}
    \item The NIF $51234511\text{-}x_r$ would have the control digit $r = 51234511 \text{ mod 23} = 10$, which corresponds to the letter X. Therefore, the complete NIE is $51234511\text{-}X$.
    \item Suppose we receive the NIF $18279322\text{-}G$. To verify that a NIF is correct, we must see that the control digit condition is met. We see that $r = 18279322 \text{ mod 23} = 3$, which corresponds to the letter A. Since the received letter is G, we detect an error.
\end{enumerate}

In this way, the structure of the NIF is a code $\mathcal{C}$ whose alphabet is $\mathcal{A} = \{ (x_7x_6x_5x_4x_3x_2x_1x_0\text{-}l)\:|\: x_i \in \{0, ..., 9\}, l \in \{A, ..., Z\}\backslash\{I, O, U\} \}$.
We see that the minimum distance of this code is 2. It is clear that the distance is greater than 1, as each NIF has a unique associated letter. Furthermore, we can find two NIFs that are at a distance of 2, such as $12345678\text{-}Z$ and $123456789\text{-}S$.

Therefore, applying Theorem \ref{theorem:algoritmo}, we obtain that there exists an algorithm that can detect 1 error but without correcting any.

\end{example}\medskip

If in an $[n, M, d]_q$-code $d$ is increased, $M$ decreases, obtaining fewer possible keys, or $n$ increases, enlarging the size of the words. This situation is known as the Singleton bound.

\begin{theorem}\label{theorem:singleton}
If $\mathcal{C}$ is an $[n, M, d]_q$-code, then $M \leq q^{n-d+1}$. 

Note: This inequality is known as the Singleton bound.
\end{theorem}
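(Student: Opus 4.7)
The plan is to prove the Singleton bound by a puncturing argument: I will exhibit an injection from $\mathcal{C}$ into a set whose cardinality is obviously $q^{n-d+1}$, and the bound will follow immediately from counting.

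Concretely, first I would define the projection map $\pi : \mathcal{A}^n \to \mathcal{A}^{n-d+1}$ that simply deletes the last $d-1$ coordinates of a word, i.e., $\pi(x_1, \ldots, x_n) = (x_1, \ldots, x_{n-d+1})$. Then I would consider its restriction $\pi|_{\mathcal{C}}$ to the code. The key step is to verify that $\pi|_{\mathcal{C}}$ is injective. Suppose, for contradiction, that two distinct codewords $u, v \in \mathcal{C}$ satisfied $\pi(u) = \pi(v)$. Then $u$ and $v$ would agree on the first $n-d+1$ coordinates, so they could differ in at most the remaining $d-1$ positions, giving $d(u,v) \leq d-1 < d$. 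This contradicts the fact that $d(\mathcal{C}) = d$, so $\pi|_{\mathcal{C}}$ must be injective.

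Once injectivity is established, the conclusion is immediate: since $\pi|_{\mathcal{C}}$ is an injection from $\mathcal{C}$ into $\mathcal{A}^{n-d+1}$, we obtain
\begin{align*}
M = |\mathcal{C}| \leq |\mathcal{A}^{n-d+1}| = q^{n-d+1}.
\end{align*}

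There is essentially no serious obstacle here; the only subtle point is the injectivity verification, and even that reduces to a one-line application of the definition of minimum distance. Note also that the argument uses nothing about the algebraic structure of the alphabet $\mathcal{A}$, only its cardinality, so the bound applies to all $[n,M,d]_q$-codes and not merely the linear ones.
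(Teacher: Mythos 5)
Your proof is correct. It is, however, organized differently from the paper's: the paper proves the bound by induction on $d$, shortening the code one coordinate at a time and invoking the inductive hypothesis on the resulting $[n-1, M, d-1]_q$-code, whereas you delete all $d-1$ coordinates in a single step and conclude directly by exhibiting an injection $\pi|_{\mathcal{C}} : \mathcal{C} \to \mathcal{A}^{n-d+1}$. Both arguments rest on the same underlying puncturing idea, but your one-shot version is cleaner and arguably more rigorous: the paper's inductive step asserts that the shortened code has minimum distance exactly $d-1$, which is not justified (and not true in general --- one can only guarantee the distance does not drop by more than one per deleted coordinate, and even that requires care about which coordinate is removed). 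Your argument sidesteps this entirely, since all you need is that two codewords agreeing on the first $n-d+1$ coordinates would be at Hamming distance at most $d-1$, contradicting $d(\mathcal{C}) = d$. Your closing observation that the argument uses only the cardinality of the alphabet, not any algebraic structure, is also accurate and consistent with the paper's statement of the theorem for general (not necessarily linear) codes.
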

\begin{proof}
    Induction on $d$.
    \begin{itemize}
        \item Case $d=1$: It is trivial, since $\mathbb{F}_q^n$ is an $[n, M, 1]_q$-code with $M = q^n$ and any $\mathcal{C} \subseteq \mathbb{F}^n_q$ has $M \leq q^n$.
        \item Suppose the bound is true for codes with distance $d-1$ and let $\mathcal{C}$ be an $[n, M, d]_q$-code. We shorten $\mathcal{C}$ in one coordinate, that is, we construct $\mathcal{C}^*$ as an $[n-1, M, d-1]_q$-code by removing the last coordinate from each key of $\mathcal{C}$. Note that $\mathcal{C}^*$ necessarily has the same number of keys, since the minimum distance is greater than 1, so by removing one coordinate, no pair of keys coincide. By the induction hypothesis, we have that $M \leq q^{(n-1) - (d-1) + 1} = q^{n-d+1}$. \qedhere
    \end{itemize}
\end{proof}\medskip

The linear $[n, k, d]_q$-codes that satisfy the equality in the Singleton bound are called MDS codes (\textit{Maximum Distance Separable}). These codes maximize $d$ by fixing $n$ and $k$. Reed-Solomon codes are MDS linear codes, as we will see later.

\subsection{Reed-Solomon codes}
To describe Reed-Solomon codes, we need to define some additional concepts that form the basis for constructing these codes.

\medskip
\begin{definition}\textbf{Cyclic Codes}

Given $c = (c_0, c_1, ..., c_{n-1})$, we say that $c' = (c_{n-1}, c_0, c_1, ..., c_{n-2})$ is a cyclic shift.

For a linear code $\mathcal{C}$, we say it is a cyclic code if for any $c \in \mathcal{C}$, the cyclic shift $c' \in \mathcal{C}$.
\end{definition}\medskip

\begin{definition}\textbf{Generator and Parity Matrices}

A generator matrix $G$ of a linear code $\mathcal{C}$ is a matrix whose rows form a basis for $\mathcal{C}$, meaning every element of $\mathcal{C}$ can be constructed as a linear combination of the rows of $G$.

If $\mathcal{C}$ is an $[n, k, d]_q$ linear code, its generator matrix is a $k \times n$ matrix:
$$
G = \begin{pmatrix}
  v_{01} & v_{02} & ... & v_{0n}\\ 
  v_{11} & v_{12} & ... & v_{1n}\\ 
  ... & ... & ... & ...\\ 
  v_{k1} & v_{k2} & ... & v_{kn}\\ 
\end{pmatrix}.
$$

A matrix $H$ of dimension $(n-k) \times n$ that satisfies $HG^T = 0$ (or equivalently $GH^T = 0$) is called the parity-check matrix of a linear code $\mathcal{C}$. The columns of this matrix generate the orthogonal complement of $\mathcal{C}$, i.e., an element $w \in \mathcal{A}^n$ is a codeword of $\mathcal{C}$ if $Hw = 0$.
\end{definition}\medskip

For the process of encoding and decoding messages, these matrices are very useful.
On one hand, a generator matrix of a linear code $\mathcal{C}$ helps to generate all the codewords of the code. On the other hand, the parity-check matrix helps to verify that an element of $\mathcal{A}^n$ is a codeword of the code.
\medskip
\begin{example}\label{example:generator} Let $\mathcal{C}$ be a (linear) binary code with generator matrix
$$G = \begin{pmatrix}
  1 & 0 & 1 & 0\\ 
  0 & 1 & 0 & 1
\end{pmatrix}$$
and parity-check matrix $H = G$.

The elements of $\mathcal{C}$ are ${(0, 0, 0, 0), (1, 0, 1, 0), (0, 1, 0, 1), (1, 1, 1, 1)}$. The code is cyclic. Note that $(1, 0, 1, 0)$ is the cyclic shift of $(0, 1, 0, 1)$, and the other cases are trivial.
\end{example}\medskip

\medskip

\begin{definition}\textbf{Finite Field}

Let $f(x)$ be an irreducible polynomial of degree $n \geq 1$.
We define the finite field \[F := \mathbb{F}_q = GF(p^n) = \frac{\mathbb{Z}_p[x]}{f(x)} = \{a_0 + a_1\alpha + ... + a_{n-1}\alpha^{n-1} \:|\:a_i \in \mathbb{Z}_p, f(\alpha) = 0 \}\]
where $GF(p^n)$ is the Galois field of $p^n$ elements, $p$ is prime, and $q = p^n$. Moreover, $F^* = F \backslash \{0\}$ is a cyclic group of order $q-1$ (see \cite{CuerpoFinito}). 

Note: If $\alpha$ is a generator of $F^*$, then $f(x)$ is called a primitive polynomial.
    
\end{definition}\medskip

We define the quotient ring $F_n[x] = \frac{F[x]}{x^n - 1}$, whose elements are the equivalence classes defined by the relation:
\[
f(x) \sim g(x) \iff f(x) \equiv g(x) (\text{mod }(x^n - 1)).
\]
By the Correspondence Theorem (see \cite{szeto2009correspondence}) we know that $F^n$ is isomorphic to $F_n[x]$:
\begin{equation}
\begin{aligned}
  \pi \colon F^n & \longrightarrow F_n[x] \\
  (0, 0, ..., 0) & \longmapsto 0 \\
  (1, 0, ..., 0) & \longmapsto 1 \\
  (0, 1, ..., 0) & \longmapsto x \\
  (0, 0, ..., 1) & \longmapsto x^{n-1}
\end{aligned}
\label{eqn:pi}
\end{equation}

Through the isomorphism \eqref{eqn:pi} we can identify the codewords of a code with polynomials. Note that in a word, performing a cyclic shift of one position is equivalent to multiplying by $x$ in a polynomial:

\begin{equation*}
\xymatrix{
   (c_0, c_1, ..., c_{n-1}) \ar[r]^{\text{\it desp. cicl.}} \ar@<-2pt>[d]_{\pi} & (c_{n-1}, c_0, ..., c_{n-2}) \\
   c_0 + c_1x + ... + c_{n-1}x^{n-1} \ar[r]^{\cdot x} & c_{n-1} + c_0x + ... + c_{n-2}x^{n-1} \ar@<-2pt>[u]_{\pi^{-1}}
}
\end{equation*}

\bigskip

From now on, we will equivalently use both the vector representation and the polynomial representation of elements of a code $\mathcal{C}$. That is, we will say that $f(x) = a_0 + a_1x + \ldots + a_nx^n \in \mathcal{C}$ if $(a_0, a_1, \ldots, a_n) \in \mathcal{C}$.

Next, we will see two examples to familiarize ourselves with the use of the isomorphism \eqref{eqn:pi}.

\begin{example}
Following Example \ref{example:generator}, we have $n=4$. The word $(1, 0, 1, 0)$ would be the polynomial $g(x) = 1 + x^2$. Applying a cyclic shift via the polynomial, we get $xg(x) = x + x^3$, which corresponds to the codeword $(0, 1, 0, 1)$. Furthermore, $x^2g(x) = x^2 + x^4 = 1 + x^2$. The cases $(0, 0, 0, 0)$ and $(1, 1, 1, 1)$ are trivial.    
\end{example}\medskip

\bigskip

\begin{example}
$F = \mathbb{F}_{16} = GF(2^4) = \frac{\mathbb{Z}_2[x]}{f(x)}$, with $f(x) = 1 + x + x^4$.

Suppose we want to calculate $(1100)(1010)$. 
To do this, we need the correspondence table from $F^n$ to $F^n[x]$, where $\beta = (0 1 0 0)$, that is, $x \text{ mod }f(x)$:

\begin{table}[H]
\centering
\setlength{\tabcolsep}{7pt}
\renewcommand{\arraystretch}{1.3}
\caption{Construction of $GF(2^4)$ with $f(x) = 1+x+x^4$.}
\begin{tabular}{L|C|C}
\text{codeword}               & x^i \text{ mod }f(x)              & \text{power of }\beta    \\ \hline
0000                       & 0                                 & -                           \\
1000                       & 1                                 & \beta^0                     \\
0100                       & x                                 & \beta                       \\
0010                       & x^2                               & \beta^2                     \\
0001                       & x^3                               & \beta^3                     \\
1100                       & x^4 \equiv 1+x                    & \beta^4                     \\
0110                       & x+x^2                             & \beta^5                     \\
0011                       & x^2 + x^3                         & \beta^6                     \\
1100                       & x^3 + x^4 \equiv 1 + x + x^3      & \beta^7                     \\
1010                       & x + x^2 + x^4 \equiv 1 + x^2      & \beta^8                     \\
0101                       & x + x^3                           & \beta^9                     \\
1110                       & x^2 + x^4 \equiv 1+x+x^2          & \beta^{10}                  \\
0111                       & x + x^2 + x^3                     & \beta^{11}                  \\
1111                       & x^2 + x^3 +x^4 \equiv 1+x+x^2+x^3 & \beta^{12}                  \\
1011                       & x+x^2+x^3+x^4 \equiv 1+x^2+x^3    & \beta^{13}                  \\
1001                       & x + x^3 + x^4 \equiv 1 + x^3      & \beta^{14}                  \\
1000                       & x + x^4 \equiv 1                  & \beta^{15} = \beta^0
\end{tabular}
\label{table:gf16}
\end{table}
\vspace*{-\baselineskip}

We see that $(1100)(1010) = \beta^4\beta^8 = \beta^{12} = (1111)$.
 
\end{example}\medskip

\subsubsection{Cyclic Codes}
In this section, we will see four results that will help us understand cyclic codes over polynomial rings and how they can be generated by monic polynomials of minimal degree. Additionally, these results provide tools for encoding and analyzing cyclic codes, which will be useful later for developing BCH and Reed-Solomon codes.

\begin{theorem}\label{theorem:cyclicideal}
Let $\mathcal{C}\subseteq F^n$ be a linear code and $\pi:F^n\longrightarrow F_n[x]$ be the isomorphism \eqref{eqn:pi}.
Then, $\mathcal{C}$ is a cyclic code if and only if $\pi(\mathcal{C})$ is an ideal of $F_n[x]$.
\end{theorem}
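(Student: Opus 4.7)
The plan is to prove both directions by exploiting the identification already established before the theorem: in $F_n[x]$, multiplication by $x$ corresponds via $\pi$ to a single cyclic shift in $F^n$.

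For the forward direction ($\Rightarrow$), I would assume $\mathcal{C}$ is cyclic and show $\pi(\mathcal{C})$ is an ideal of $F_n[x]$. Since $\mathcal{C}$ is linear and $\pi$ is an $F$-linear isomorphism, $\pi(\mathcal{C})$ is automatically an $F$-subspace of $F_n[x]$, so it is closed under addition and under scalar multiplication by elements of $F$. The nontrivial content is closure under multiplication by arbitrary elements of $F_n[x]$. By the commutative diagram preceding the statement, if $c(x) = \pi(c) \in \pi(\mathcal{C})$, then $x \cdot c(x) = \pi(c')$, where $c'$ is the cyclic shift of $c$; since $\mathcal{C}$ is cyclic, $c' \in \mathcal{C}$, hence $x \cdot c(x) \in \pi(\mathcal{C})$. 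Iterating, $x^i \cdot c(x) \in \pi(\mathcal{C})$ for every $i \geq 0$, and then combining with $F$-linearity gives $g(x) \cdot c(x) \in \pi(\mathcal{C})$ for every $g(x) \in F_n[x]$. This is exactly the ideal property.

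For the reverse direction ($\Leftarrow$), I would assume $\pi(\mathcal{C})$ is an ideal of $F_n[x]$. Then $\pi(\mathcal{C})$ is in particular an $F$-subspace, so $\mathcal{C} = \pi^{-1}(\pi(\mathcal{C}))$ is a subspace of $F^n$ (this part is redundant since the hypothesis already states $\mathcal{C}$ is linear, but it is worth noting). To check the cyclic property, pick any $c \in \mathcal{C}$ and let $c(x) = \pi(c)$. Since $\pi(\mathcal{C})$ is an ideal, $x \cdot c(x) \in \pi(\mathcal{C})$, and by the diagram $\pi^{-1}(x \cdot c(x))$ is precisely the cyclic shift $c'$ of $c$. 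Hence $c' \in \mathcal{C}$, so $\mathcal{C}$ is cyclic.

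The only step that requires any care is justifying that closure under multiplication by $x$ plus $F$-linearity already implies closure under multiplication by every element of $F_n[x]$, which follows because $F_n[x]$ is generated as an $F$-algebra by the class of $x$. Everything else is a direct translation through the isomorphism $\pi$, so I do not anticipate a real obstacle; the proof is essentially a bookkeeping exercise built on the shift/multiplication correspondence already displayed.
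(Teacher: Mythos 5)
Your proposal is correct and follows essentially the same route as the paper: both directions reduce to the observation that multiplication by $x$ corresponds to a cyclic shift under $\pi$, with the forward direction obtaining closure under multiplication by an arbitrary $g(x)=\sum_i g_i x^i$ by decomposing it into the terms $g_i x^i c(x)$ and using that $\pi(\mathcal{C})$ is an $F$-subspace. Your remark that the linearity check in the reverse direction is redundant given the hypothesis is accurate (the paper carries it out anyway), but this is a presentational difference, not a different argument.
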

\begin{proof}\phantom{}

\boxed{\Rightarrow} To see that $\pi(\mathcal{C})$ is an ideal of $F_n[x]$, we need to see that $\pi(\mathcal{C})$ is an additive subgroup of $F_n[x]$ and that the product of an element of $\pi(\mathcal{C})$ by an element of $F_n[x]$ belongs to $\pi(\mathcal{C})$.
\vspace{-\baselineskip}
\begin{enumerate}
\item If $a = \sum\limits_{i=0}^{n-1}a_ix^i$, $b = \sum\limits_{i=0}^{n-1}b_ix^i \in \pi(\mathcal{C})$, then $\textbf{a} = (a_0, \ldots, a_{n-1})$, $\textbf{b} = (b_0, \ldots, b_{n-1}) \in \mathcal{C}$, and therefore $\textbf{a} \pm \textbf{b} \in \mathcal{C}$ and $a \pm b \in \pi(\mathcal{C})$.

\item If $r =  \sum\limits_{i=0}^{n-1}r_ix^i \in F_n[x]$, then $r \cdot a = \sum\limits_{i=0}^{n-1}r_ix^ia$, where $x^ia$ is a cyclic shift so $x^ia \in \pi(\mathcal{C})$. Since $r_i \in F$ and $\pi(\mathcal{C})$ is an $F$-vector space, we have $r_ix^ia \in \pi(\mathcal{C})$ for each $i = 0, \ldots, n-1$, thus $ra \in \pi(\mathcal{C})$.
\end{enumerate}

\boxed{\Leftarrow} Let $\alpha, \beta \in F$ and $\textbf{a}, \textbf{b} \in \mathcal{C}$. By hypothesis, $\alpha\pi(\textbf{a}) + \beta\pi(\textbf{b}) \in \pi(\mathcal{C})$. By linearity of $\pi$, we have $\alpha\pi(\textbf{a}) + \beta\pi(\textbf{b}) = \pi(\alpha\textbf{a} + \beta\textbf{b})$, thus $\alpha\textbf{a} + \beta\textbf{b} \in \mathcal{C}$ and $\mathcal{C}$ is a linear code. Additionally, if $\pi(a) \in \pi(\mathcal{C})$, then $x\cdot \pi(a) \in \pi(\mathcal{C})$, so the cyclic shift of $a$ is part of the code and $\mathcal{C}$ is cyclic. \qedhere
\end{proof}\medskip

\medskip

\begin{proposition}\label{prop:cyclicmonic}
    Let $\mathcal{C} \subseteq F^n$ be a cyclic code. Then there exists a unique monic polynomial $g(x)$ of minimal degree such that
    \[
    \mathcal{C} = \langle g(x) \rangle = \{t(x)g(x) \in F_n[x] \:|\: t(x) \in F[x] \}
    \]
    Furthermore, $g(x) \divides (x^n-1)$.
\end{proposition}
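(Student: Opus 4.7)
The plan is to use Theorem \ref{theorem:cyclicideal} to transfer the problem to an ideal-theoretic statement: since $\mathcal{C}$ is cyclic, $\pi(\mathcal{C})$ is an ideal of $F_n[x]$. The task then becomes showing that this ideal is principal, generated by a unique monic polynomial of minimal degree which also divides $x^n - 1$. Essentially, I would mimic the standard proof that $F[x]$ is a PID, adapted to the quotient ring $F_n[x]$.

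First I would establish existence and uniqueness of the generator. Among all nonzero elements of $\pi(\mathcal{C})$, pick one of minimal degree; by dividing through by its leading coefficient (which is valid since $F$ is a field) one may assume it is monic. Call this $g(x)$. For uniqueness, if $g_1(x), g_2(x)$ are both monic of minimal degree in $\pi(\mathcal{C})$, then $g_1(x) - g_2(x)$ lies in the ideal and has strictly smaller degree, so it must be the zero polynomial, forcing $g_1 = g_2$.

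Next I would show $\pi(\mathcal{C}) = \langle g(x) \rangle$. The inclusion $\langle g(x) \rangle \subseteq \pi(\mathcal{C})$ is immediate from the ideal property. For the reverse inclusion, take any $c(x) \in \pi(\mathcal{C})$ and perform Euclidean division in $F[x]$: $c(x) = q(x)g(x) + r(x)$ with $\deg r < \deg g$. Then $r(x) = c(x) - q(x)g(x)$ is still in the ideal $\pi(\mathcal{C})$, and by the minimality of $\deg g$ it must be zero, giving $c(x) \in \langle g(x) \rangle$. The only subtlety here is to perform the division in $F[x]$ rather than $F_n[x]$ so that degrees behave properly; I would note that the representative of minimal degree behaves well because $\deg g < n$.

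Finally, for $g(x) \mid (x^n - 1)$, I would apply the same division trick to $x^n - 1$ itself: write $x^n - 1 = q(x)g(x) + r(x)$ in $F[x]$ with $\deg r < \deg g$. Reducing modulo $x^n - 1$ gives $r(x) \equiv -q(x)g(x) \in \pi(\mathcal{C})$, and then minimality of $\deg g$ again forces $r(x) = 0$, so $g(x)$ divides $x^n - 1$ in $F[x]$. The main obstacle, though a mild one, is keeping careful track of which computations happen in $F[x]$ versus in the quotient $F_n[x]$; the division algorithm and the degree argument both live in $F[x]$, while the ideal membership lives in $F_n[x]$, and the two must be reconciled via the canonical projection.
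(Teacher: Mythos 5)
Your proposal is correct and follows essentially the same route as the paper: take a monic element of minimal degree and use Euclidean division in $F[x]$ to show it generates $\pi(\mathcal{C})$. In fact your version is more complete at the two points where the paper is weakest: you actually prove uniqueness (via $g_1 - g_2$ having smaller degree), which the paper asserts but never argues, and your divisibility step --- dividing $x^n-1$ by $g(x)$ and forcing the remainder into the ideal --- is the correct standard argument, whereas the paper's closing sentence about $\mathcal{C}$ having ``$\frac{x^n-1}{g(x)}$ elements'' and ``all polynomials in $\mathcal{C}$ dividing $x^n-1$'' is garbled and does not establish the claim. Your explicit care about which computations live in $F[x]$ versus $F_n[x]$ is exactly the bookkeeping the paper glosses over.
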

\begin{proof}
    We will first show the existence of $g(x)$. 
    
    Let $S = \{\text{deg}(g(x)) \:|\: g(x) \in \mathcal{C}\} \subseteq \mathbb{N}$. Since $S$ is a non-empty subset of
    $\mathbb{N}$, there exists a smallest element $r \geq 0$ such that:
    \[
    g_1(x) = a_0 + a_1x + \ldots + a_rx^r \in \mathbb{C}, a_r \neq 0
    \]
    Let $g(x) = a_r^{-1}g_1(x) = x^r + \sum_{i=0}^{r-1}a_r^{-1}a_ix^i$. Thus $g(x)$ is a monic polynomial of minimal degree in $\mathcal{C}$.

    Let $\mathcal{C}' = \langle g(x) \rangle = \{t(x)g(x) \:|\: t(x) \in F[x]\}$, the ideal generated by $g(x)$. Since $g(x) \in \mathcal{C}$ and $\mathcal{C}$ is cyclic, for any $t(x) \in F[x]$, we have $t(x)g(x) \in \mathcal{C}$ and $\mathcal{C}' \subseteq \mathcal{C}$.

    If $\mathcal{C} \setminus \mathcal{C}' \neq \emptyset$, then there exists $h(x) \in \mathcal{C} \setminus \mathcal{C}'$. Using the division algorithm for $F[x]$, there exist $q(x), r(x) \in F[x]$ such that
    \[
    h(x) = q(x)g(x) + r(x), \text{ with deg}(r(x)) < \text{deg}(g(x))
    \]
    However, $h(x) \in \mathcal{C}$, and $\mathcal{C}' \subseteq \mathcal{C}$, thus $q(x)g(x) \in \mathcal{C}$. Since $\mathcal{C}$ is a vector space, $h(x) - q(x)g(x) = r(x) \in \mathcal{C}$. Since $r(x) \in \mathcal{C}$ and $r(x) \neq 0$, $\text{deg}(r(x))$ contradicts the minimality of $\text{deg}(g(x))$. Hence, $\mathcal{C} = \mathcal{C}' = \langle g(x) \rangle$.

    Finally, since $\mathcal{C}$ is a linear code, it has $q(x) = \frac{x^n-1}{g(x)}$ elements, and all polynomials in $\mathcal{C}$ divide $x^n-1$.
\end{proof}\medskip

\medskip

\begin{proposition}\label{prop:dimensionciclic}
    The set $\{g(x), xg(x), ..., x^{n-r+1}g(x)\}$ forms a basis for $C$, and therefore, $C$ has dimension $n - r$.
\end{proposition}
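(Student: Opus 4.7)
The plan is to establish the two defining properties of a basis---linear independence and spanning---for the $n-r$ polynomials $g(x),\, xg(x),\, \ldots,\, x^{n-r-1}g(x)$ (the upper index in the statement appears to be a typo; with $n-r-1$ on top the set has exactly $n-r$ elements, matching the claimed dimension). Throughout I would use that $g(x)$ is the monic generator from Proposition \ref{prop:cyclicmonic}, so $\deg g = r$ and $g(x) \divides (x^n-1)$.

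\textbf{Linear independence.} Because $g(x)$ is monic of degree $r$, the polynomial $x^i g(x)$ has degree exactly $r + i$ for each $i = 0, 1, \ldots, n-r-1$. These degrees are pairwise distinct and all strictly less than $n$, so no reduction modulo $x^n-1$ takes place inside $F_n[x]$. A standard leading-coefficient argument on the highest-degree surviving term then forces every nontrivial $F$-linear combination $\sum_i \lambda_i\, x^i g(x)$ to be nonzero.

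\textbf{Spanning.} Given $c(x) \in \mathcal{C}$, Proposition \ref{prop:cyclicmonic} yields $c(x) \equiv t(x) g(x) \pmod{x^n - 1}$ for some $t(x) \in F[x]$. Since $g \divides (x^n-1)$, I can write $x^n - 1 = g(x) h(x)$ with $\deg h = n-r$, and then reduce $t(x)$ modulo $h(x)$ to get $t(x) = q(x) h(x) + s(x)$ with $\deg s < n - r$. The difference satisfies $t(x) g(x) - s(x) g(x) = q(x) h(x) g(x) = q(x)(x^n-1)$, so $c(x) \equiv s(x) g(x) \pmod{x^n-1}$. Expanding $s(x) = \sum_{i=0}^{n-r-1} s_i x^i$ exhibits $c(x)$ as an $F$-linear combination of $\{x^i g(x)\}_{i=0}^{n-r-1}$, as required.

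The only subtle step---and thus the main obstacle---is this reduction of $t(x)$ modulo $h(x) = (x^n-1)/g(x)$ to bound its degree by $n-r-1$; it is precisely the place where the divisibility $g \divides (x^n-1)$ established in Proposition \ref{prop:cyclicmonic} is needed. Once the basis is in hand, counting its $n-r$ elements gives $\dim \mathcal{C} = n-r$ immediately.
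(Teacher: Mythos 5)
Your proposal is correct and follows essentially the same route as the paper: the spanning argument (writing $x^n-1 = g(x)h(x)$ and reducing $t(x)$ modulo $h(x)$ to force $\deg s < n-r$) is exactly the paper's, and you rightly flag the exponent $n-r+1$ as a typo for $n-r-1$. The only difference is that you spell out linear independence via the distinct-degrees/leading-coefficient argument, which the paper leaves implicit.
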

\begin{proof}
    Let's assume $\text{deg}(g(x)) = r$. For this proof, we will first show that $\mathcal{C} = \{u(x)g(x) \:|\: \text{deg}(u(x)) < n-r\}$.
    
    From the previous proposition, we know that:
    \[
    \mathcal{C} = \langle g(x) \rangle = \{g(x)f(x) \:|\:f(x) \in F_n[x]\}
    \]
    First, we'll show that we can restrict $f(x)$ to have $\text{deg}(f(x)) < n-r$. We know that $x^n-1 = h(x)g(x)$ for some polynomial $h(x)$ of degree $n-r$. By dividing in $F[x]$, we get $f(x) = q(x)h(x) + u(x)$, where $\text{deg}(u(x)) < n-r$ or $u(x) \equiv 0$. Thus,
    \begin{align*}  
    &f(x)g(x) = q(x)h(x)g(x) + u(x)g(x) = q(x)(x^n-1) + u(x)g(x) \Rightarrow \\
    \Rightarrow &f(x)g(x) = u(x)g(x)\text{ in }F_n[x]
    \end{align*}
    Therefore, $\text{deg}(f(x)) < n-r$.

    Since the polynomials $\{1, x, ..., x^{n-r+1}\}$ are $n-r$ linearly independent polynomials of degree less than $n-r$, we have that $\{g(x), xg(x), ..., x^{n-r+1}g(x)\}$ forms a basis for $\mathcal{C}$ and $\dim(\mathcal{C}) = n-r$. \qedhere
\end{proof}\medskip

\bigskip

\begin{theorem}
    Let $g(x) = q_1(x)q_2(x)...q_t(x)$ be a product of irreducible factors of $x^n-1$ over $F$, and let $\alpha_i$ be roots of $q_i(x)$. Then:
    \[
    C = \langle g(x) \rangle = \{f(x) \in F_n[x] \:|\: f(\alpha_1) = 0, ... f(\alpha_t) = 0\}
    \]
\end{theorem}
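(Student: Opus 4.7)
The plan is to prove set equality by double inclusion, exploiting the fact that each $q_i(x)$ is irreducible over $F$ and therefore (up to a unit) coincides with the minimal polynomial of its root $\alpha_i$.

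For the inclusion $\langle g(x) \rangle \subseteq \{f(x) \in F_n[x] \mid f(\alpha_i) = 0 \text{ for all } i\}$, I would take $f(x) \in \langle g(x) \rangle$ so that $f(x) = t(x)g(x)$ in $F_n[x]$. Since each $q_i(x)$ appears as a factor of $g(x)$ and $q_i(\alpha_i) = 0$, I get $g(\alpha_i) = 0$, hence $f(\alpha_i) = t(\alpha_i)g(\alpha_i) = 0$ for every $i$. This direction is essentially immediate.

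The substantive direction is the reverse inclusion. Given $f(x) \in F_n[x]$ with $f(\alpha_i) = 0$ for every $i$, I want to conclude $g(x) \mid f(x)$ in $F[x]$, which then yields $f(x) \in \langle g(x) \rangle$. For each $i$, since $q_i(x)$ is irreducible over $F$ and $\alpha_i$ is a root of $q_i(x)$, the minimal polynomial of $\alpha_i$ over $F$ is $q_i(x)$ (up to a unit). The condition $f(\alpha_i) = 0$ therefore forces $q_i(x) \mid f(x)$ in $F[x]$. Because the $q_i(x)$ are distinct irreducible factors of $x^n-1$, they are pairwise coprime in $F[x]$, so their product $g(x) = q_1(x)q_2(x)\cdots q_t(x)$ also divides $f(x)$. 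Writing $f(x) = t(x)g(x)$ exhibits $f(x)$ as an element of $\langle g(x) \rangle$.

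The main subtlety I anticipate is the pairwise coprimality of the $q_i(x)$. This uses that $x^n - 1$ is separable over $F$, i.e.\ that distinct irreducible factors of $x^n - 1$ share no roots in the algebraic closure; this holds whenever $\gcd(n, \operatorname{char} F) = 1$, which is the standing assumption for cyclic codes in this setting and will be used implicitly. Once that point is granted, combining the single-factor divisibility $q_i(x) \mid f(x)$ across all $i$ into $g(x) \mid f(x)$ is a standard coprimality argument, and the two inclusions together yield the stated equality.
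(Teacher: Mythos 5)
Your proposal is correct and follows essentially the same route as the paper: the easy inclusion via $g(\alpha_i)=0$, and the reverse inclusion by noting that irreducibility makes $q_i$ the minimal polynomial of $\alpha_i$, so each $q_i \mid f$, and then combining via coprimality of the distinct factors (the paper phrases this as $\bigcap_i \langle q_i\rangle = \langle\operatorname{lcm}\{q_i\}\rangle = \langle\prod_i q_i\rangle$). Your explicit remark that distinctness of the roots rests on separability of $x^n-1$, i.e.\ $\gcd(n,\operatorname{char}F)=1$, is a point the paper leaves implicit.
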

\begin{proof}
    Let $C = \langle g(x) \rangle$ and $S_{\alpha_1, ..., \alpha_t} = \{f(x) \in F_n[x] \:|\: f(\alpha_1) = 0, ... f(\alpha_t) = 0\}$
    
    \boxed{\subseteq} Let $f(x) \in C$. Clearly, $f(x) = a(x)g(x)$. Thus $f(\alpha_i) = 0$ for $1 \leq i \leq t$, hence $f(x) \in S_{\alpha_1, ..., \alpha_t}$ and $C \subseteq S_{\alpha_1, ..., \alpha_t}$.

    \boxed{\supseteq} Let $f(x) \in S_{\alpha_1, ..., \alpha_t}$. This means $f(\alpha_i) = 0$. Since $q_i(x)$ is irreducible and monic and $q_i(\alpha_i) = 0$, $q_i(x)$ divides $f(x)$ for $1 \leq i \leq t$. This implies:
    \[
    f(x) \in \bigcap\limits_{i=1}^{t} \langle q_i(x) \rangle = \langle \text{lcm}\{q_1(x), ..., q_t(x)\} \rangle
    \]

    Since $q_i$, $i = 1, ..., t$, are irreducible factors over $F$ and do not share common roots with $q_j$, $j \neq i$, it follows that:
    \[
    f(x) \in \langle \text{lcm}\{q_1(x), ..., q_t(x)\} \rangle = \langle \prod\limits_{i=1}^{t}q_i(x) \rangle = \langle g(x) \rangle
    \]

    Therefore, $S_{\alpha_1, ..., \alpha_t} \subseteq C$, and we have equality. \qedhere
    
\end{proof}\medskip

\subsubsection{Detailed Functioning}
Bose-Chaudhuri-Hocquenghem (BCH) cyclic codes are an important class of error-correcting codes in coding theory. These codes are particularly known for their ability to correct multiple errors and their robustness in environments where data transmission reliability is crucial.

BCH codes are considered precursors to Reed-Solomon codes due to their relationship and structural similarities. In fact, Reed-Solomon codes can be seen as a specialized subtype of BCH codes (see \cite{wicker1994introduction}).

\begin{definition}\textbf{BCH Codes}\label{def:bch}

    Let $g(x) = q_1(x)q_2(x)...q_t(x)$ be a product of irreducible factors of $x^n-1$ over $F$, and let $\alpha_i$ be roots of $q_i(x)$. Then:
    \[
    C = \langle g(x) \rangle = \{f(x) \in F_n[x] \:|\: f(\alpha_1) = 0, ... f(\alpha_t) = 0\}
    \]
\end{definition}\medskip

\begin{example}
    Consider the BCH code with $q=2$ (i.e., $p=2$ and $r=1$), $m=3$, and hence $n=2^3 - 1$. Take an expected minimum distance $\delta = 3$. We need to consider a cyclic code whose generator polynomial has roots $\alpha$ and $\alpha^2$, where $\alpha \in \mathbb{F}_8$ is a primitive $n$-th root of unity.

    The factorization of $x^7 - 1$ into irreducible factors is:
    \[
    x^7 - 1 = (x - 1)(x^3 + x^2 + 1)(x^3 + x + 1)
    \]
    Let $f(x) = x^3 + x + 1$. This gives $f(\alpha) = 0$, hence $\alpha^3 = \alpha + 1$. Also, $f(\alpha^2) = (\alpha^2)^3 + \alpha^2 + 1 = (\alpha^2 + 1) + \alpha^2 + 1 = 0$. Thus, if $f(\alpha) = 0$, then $f(\alpha^2) = 0$. With this reasoning, the other root of $f(x)$ is $\alpha^4$. We can rewrite $f(x)$ as
    \[
    f(x) = x^3 + x + 1 = (x-\alpha)(x-\alpha^2)(x-\alpha^4)
    \]
    The other powers of $\alpha$ are to be roots of $(x-1)$ and $(x^3 + x^2 + 1)$. From this, we can build the polynomials $q_i$:
    \begin{align*}  
    q_0(x) &= x-1 \\
    q_1(x) &= q_2(x) = q_4(x) = x^3 + x + 1 \\
    q_3(x) &= q_5(x) = q_6(x) = x^3 + x^2 + 1
    \end{align*}
    Therefore, the generator polynomial of the BCH code is 
    \[g(x) = \text{lcm}\{q_1(x), q_2(x)\} = f(x)\]

    Thus, with $k = n - \text{deg}(g(x)) = 4$, we have constructed a BCH $[7, 4, 3]_2$-code.
\end{example}\medskip

\begin{definition}\textbf{Reed-Solomon code}

Let $q = p^r$, with $p$ prime. A Reed-Solomon code is a BCH $[n, k, d]_q$-code, where $n = q - 1$, $k = n-d+1$, and the alphabet $\mathcal{A} = GF(q)$. 
    
According to Definition \ref{def:bch} of a BCH code, the multiplicative order $m$ of $q$ modulo $n$ is fixed as 1.

These codes are often denoted as RS$(n,k)$.
\end{definition}\medskip

\begin{remark}
Reed-Solomon codes have the advantage that we can construct \( GF(q) \) as an extension of \( GF(p) \), \( GF(q) = GF(p^r) = \frac{GF(p)[x]}{f(x)} \), where \( f(x) \) is an irreducible polynomial of degree \( r \) over \( \mathbb{F}_p \).

The case with \( p = 2 \) provides greater computational simplicity, as elements of \( GF(q) \) can be expressed as binary polynomials with coefficients 0 or 1. Another advantage is that since the unity root \( \alpha \in GF(q^m) = GF(q) \), the minimal polynomial of \( \alpha^i \) is trivially \( (x - \alpha^i) \). This simplifies the calculation of the generator polynomial, as it is the product of these minimal polynomials:
\[
g(x) = \prod_{i=1}^{\delta - 1}(x-\alpha^i)
\]
\end{remark}

It is also observed that Reed-Solomon codes are Maximum Distance Separable (MDS) codes:
\begin{align*} 
k &= n - \text{deg}(g(x)) = n - (\delta - 1) \\
d &= \delta = n - k + 1
\end{align*} 
Thus, Singleton's bound \ref{theorem:singleton} is satisfied.

\subsection{Implementation of Reed-Solomon Codes}
In this section, we explore an example of a Reed-Solomon code along with an algorithm for encoding messages used in QR codes.

\begin{example}\label{ejemp:rs255223}
Let's consider the code  RS$(255, 223)$, where we have \( k = 223 \) and \( n = 255 = q - 1 \). 
We can deduce that \( q=256=2^8 \), meaning that we work over \( GF(2^8) \). Furthermore, since \( q = 2^r \), then \( r = 8 \). That is, the degree of the primitive polynomial \( f(x) \) characterizing the field \( GF(2^8) = \frac{GF(2)}{f(x)} \) is \( r \). We will use \( f(x) = 1+x^2+x^3+x^4+x^8 \) as the primitive polynomial.

The minimum distance is \( d=n-k+1=33 \). By Theorem \ref{theorem:algoritmo}, we know that \( d\geq2t+e \), where \( t \) is the number of errors that can be corrected, and \( e \in \{0, 1\} \). Therefore, \( t = 16 \), so we can correct up to 16 errors with this code.

\end{example}\medskip

\subsubsection{QR Code Workspace}
In the context of generating and decoding QR codes, the workspace refers to the mathematical field over which encoding and decoding operations are performed. These operations are carried out over a finite field, specifically over the Galois field GF(256).

QR codes break down information into bytes, or equivalently, integers between 0 and 255. 

\begin{example}\label{ejemp:algoritmocodif}
Following Example \ref{ejemp:rs255223}, we will demonstrate an algorithm to construct keys as a sequence of polynomial coefficients.

Suppose the original input message is the sequence of symbols:
\[ (m_0, m_1, \dots, m_{222}) \]
where \( m_i \) are the message symbols.
\begin{enumerate}
    \item Convert the message into coefficients of a polynomial \( f(x) \) of degree \( k-1 = 222 \):
    \[ f(x) = \sum_{i=0}^{222} m_i x^i \]

    \item Multiply \( f(x) \) by \( x^{n-k} \) to obtain a polynomial of degree \( n-1 = 254 \):
    \[ f(x) \cdot x^{32} = \sum_{i=0}^{222} m_i x^{i+32} \]

    \item Use a specific generator polynomial \( g(x) \) for RS$(255, 223)$.
    
    \item Divide \( f(x) x^{32} \) by \( g(x) \) to obtain the encoded polynomial \( s(x) \):
    \[ s(x) = (f(x) x^{32}) \mod g(x) \]
\end{enumerate}

This algorithm is used for generating error correction codes that are incorporated into a QR code (see \cite{rscodewordgithub} and \cite{rscodewordwikiversity}).
\end{example}\medskip

The algorithm in Example \ref{ejemp:algoritmocodif} is applied with a reduced input message, assigning 0 to symbols from a fixed index onward. 
For example, to encode 13 symbols using the above algorithm, the input vector would be:
\[ (m_0, m_1, \dots, m_{12}, 0, \dots, 0)\text{.} \]
This reduces the complexity of the encoding and decoding process by decreasing the dimensionality of the polynomial division in the final step of Example \ref{ejemp:algoritmocodif}.

\section{Structure of QR}\label{chap:qrstructure}
A QR code is a two-dimensional representation of data that is resilient to errors. This is achieved thanks to the use of Reed-Solomon codes. 

The process of creating a QR code consists of seven key stages:\label{qrstages}
\begin{enumerate}
    \item \textbf{Data Analysis:} In this stage, the optimal encoding mode is determined for the text string to be converted into a QR code. The encoding modes include:
        \begin{itemize}
            \item Numeric Mode: for encoding numbers.
            \item Alphanumeric Mode: for alphanumeric characters.
            \item Byte Mode: for binary data.
            \item Kanji Mode: for Japanese characters.
        \end{itemize}

    \item \textbf{Data Encoding:} The text string is converted into a sequence of bytes using the selected encoding mode from the previous stage.

    \item \textbf{Error Correction Encoding:} Error correction codewords are generated using error correction codes like Reed-Solomon. These codewords are redundant and are added to the original message to enable error detection and correction during QR code decoding.

    \item \textbf{Final Message Structuring:} Data and error correction codewords are combined and arranged in a specific order to form the final message that will be inserted into the QR code.

    \item \textbf{Module Placement in Matrix:} The resulting message bits are laid out in the module matrix (pixels) that form the QR code pattern. This pattern uses interleaving to enhance error resistance and correction capability, and it changes according to the QR code version.

    \item \textbf{Data Masking:} To improve readability and error detection capability, a masking pattern is applied to the data matrix. This pattern alters bits in the matrix according to a specific masking formula.

    \item \textbf{Format and Version Information:} Finally, format and version information is added to the QR code. Format information includes details about data masking and error correction level, while version information indicates the specific variant of the QR standard used.
\end{enumerate}

The amount of information that can be stored in a QR code depends on its version, error correction level, and the type of data stored (binary, numeric, alphanumeric, and Kanji). In the analysis we will conduct, we will use binary data type, as it is the most generic and can contain any type of information. Therefore, we will skip the data analysis step of the QR creation process.

Below is the amount of bytes that can be stored for some versions, depending on the aforementioned restrictions:

\begin{table}[H]
\centering
\caption{Storage in a QR code depending on its version, size, and error correction level.}
\setlength{\tabcolsep}{7pt}
\renewcommand{\arraystretch}{1.3}
\begin{tabular}{|c|c|c|c|}
\hline
Version & Size & Error Correction Level                                 & Storable Bytes                                           \\ \hline
1       & 21x21  & \begin{tabular}[c]{@{}c@{}}L\\ M\\ Q\\ H\end{tabular} & \begin{tabular}[c]{@{}c@{}}17\\ 14\\ 11\\ 7\end{tabular}  \\ \hline
2       & 25x25  & \begin{tabular}[c]{@{}c@{}}L\\ M\\ Q\\ H\end{tabular} & \begin{tabular}[c]{@{}c@{}}32\\ 26\\ 20\\ 14\end{tabular} \\ \hline
3       & 29x29  & \begin{tabular}[c]{@{}c@{}}L\\ M\\ Q\\ H\end{tabular} & \begin{tabular}[c]{@{}c@{}}53\\ 42\\ 32\\ 24\end{tabular} \\ \hline
\end{tabular}
\label{table:qrcapacity}
\end{table}
\vspace*{-\baselineskip}

In Table \ref{table:qrcapacity}, we observe four different error correction levels:
\begin{itemize}\label{items:porcentajeerror}
    \item \textbf{L (low)}: 7\% of erroneous data can be corrected.
    \item \textbf{M (medium)}: 15\% of erroneous data can be corrected.
    \item \textbf{Q (quartile)}: 25\% of erroneous data can be corrected.
    \item \textbf{H (high)}: 30\% of erroneous data can be corrected.
\end{itemize}
For any error correction level, a QR code reader would be capable of detecting twice the amount of erroneous data, as Reed-Solomon codes are used to both correct and detect errors.

\subsection{Decomposition of a QR Code}\label{section:separacionbloques}

The data represented in a QR code is divided into several blocks to simplify the decoding process, thereby limiting the complexity of the problem. We will further study this complexity later on.

\subsubsection{Image}

Next, let's see some examples of the format of QR codes when rendered in a two-dimensional matrix.

\begin{figure}[H]
  \centering
  \includegraphics[width=0.5\textwidth]{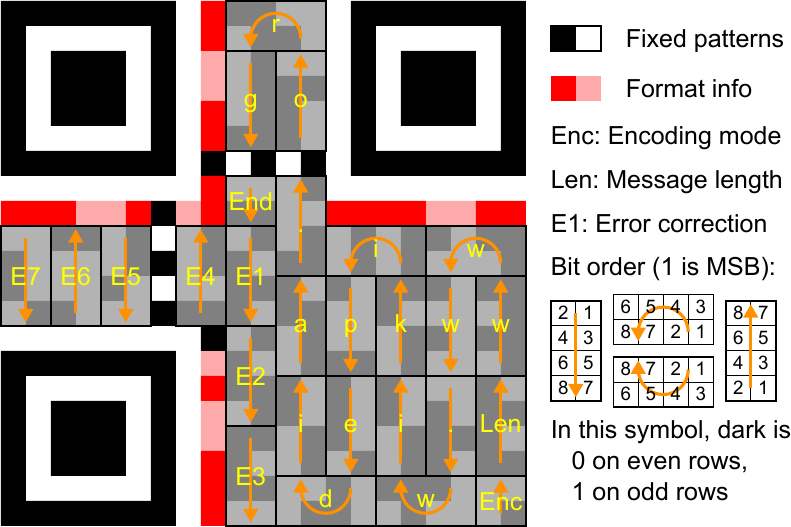}
  \caption{Layout of data in a QR code version 1. Image obtained from \cite{WikiQRCode}.}\label{fig:qrv1}
\end{figure}

QR codes are depicted in a two-dimensional matrix that only takes values 0 and 1, represented in black and white colors, respectively.

In the QR code of Figure \ref{fig:qrv1}, there is a single data block (with elements identified as [Enc, Len, w, w, w, ., w, i, k, i, p, e, d, i, a, ., o, r, g, End]). The number of error correction codes depending on the correction level and version is fixed (see \cite{ThonkyTable}). Since there are seven error correction codes (named [E1, E2, E3, E4, E5, E6, E7]) in this QR code, it implies that level L error correction is being used. A QR reader would identify this information through the format data, represented in red.

In Figure \ref{fig:qrv1}, we also see the order in which information is laid out in the QR code. This order is represented by the flow of arrows, encoding the data block first and then its error correction codes. The bit order within each data element or error correction code is indicated by the four possible ways of traversing an element.

We can also distinguish some control data introduced as QR data, such as the encoding format (Enc element), which consists of four bits, the message length (Len), and a final padding (End) indicating the end of the message. In this specific QR code, we have a total of 17 bytes to represent the desired message, which is ``www.wikipedia.org''.

\begin{figure}[H]
  \centering
  \includegraphics[width=0.35\textwidth]{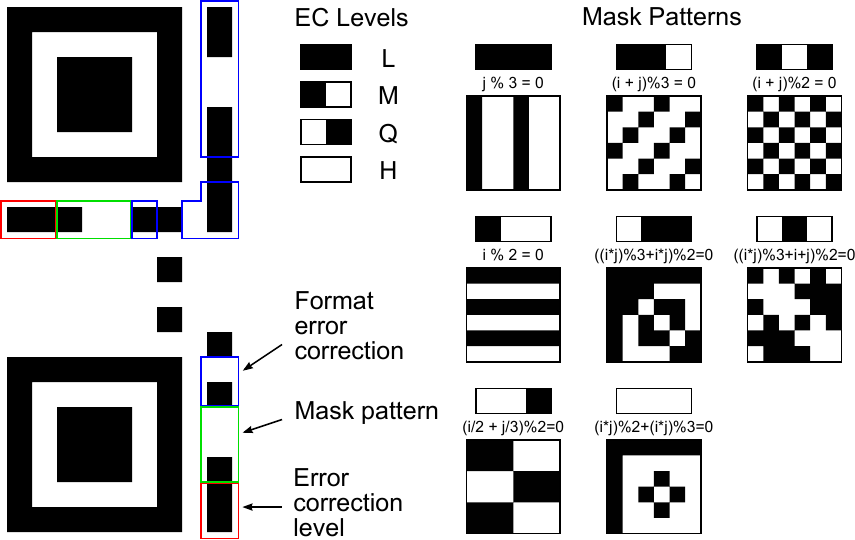}
  \caption{QR format information. Image obtained from \cite{WikiQRCode}.}\label{fig:qrformat}
\end{figure}

\medskip

The above figure shows different components that display the QR format information:
\begin{enumerate}
    \item In red, the correction level used (L, M, Q, or H). This information is represented in two bits.
    \item In green, the mask pattern applied to the QR code. There are eight possible masks, shown on the right side of the image. This information is represented in three bits.
    \item In blue, the error correction codes for the above values. With the five bits comprising the union of the correction level and the mask pattern, ten error correction codes are generated to provide greater resilience against errors.
\end{enumerate}

The mask is a pattern applied to the data arranged in the two-dimensional matrix using an XOR operation to modify the QR code to make it as easy to read as possible. The optimal mask for the represented data is found by testing the seven existing combinations. However, a QR code reader only decodes the data without analyzing whether the used mask is optimal.

\medskip

In the example of Figure \ref{fig:qrv1}, mask 011 is used, which corresponds to performing the XOR operation on the rows of the two-dimensional matrix alternately. Therefore, without considering the mask, we would read the encoding format as 0111. However, by applying the mask again, we cancel its effects. Thus, we obtain that the encoding is 0100, corresponding to binary mode.

\begin{figure}[H]
  \centering
  \includegraphics[width=0.5\textwidth]{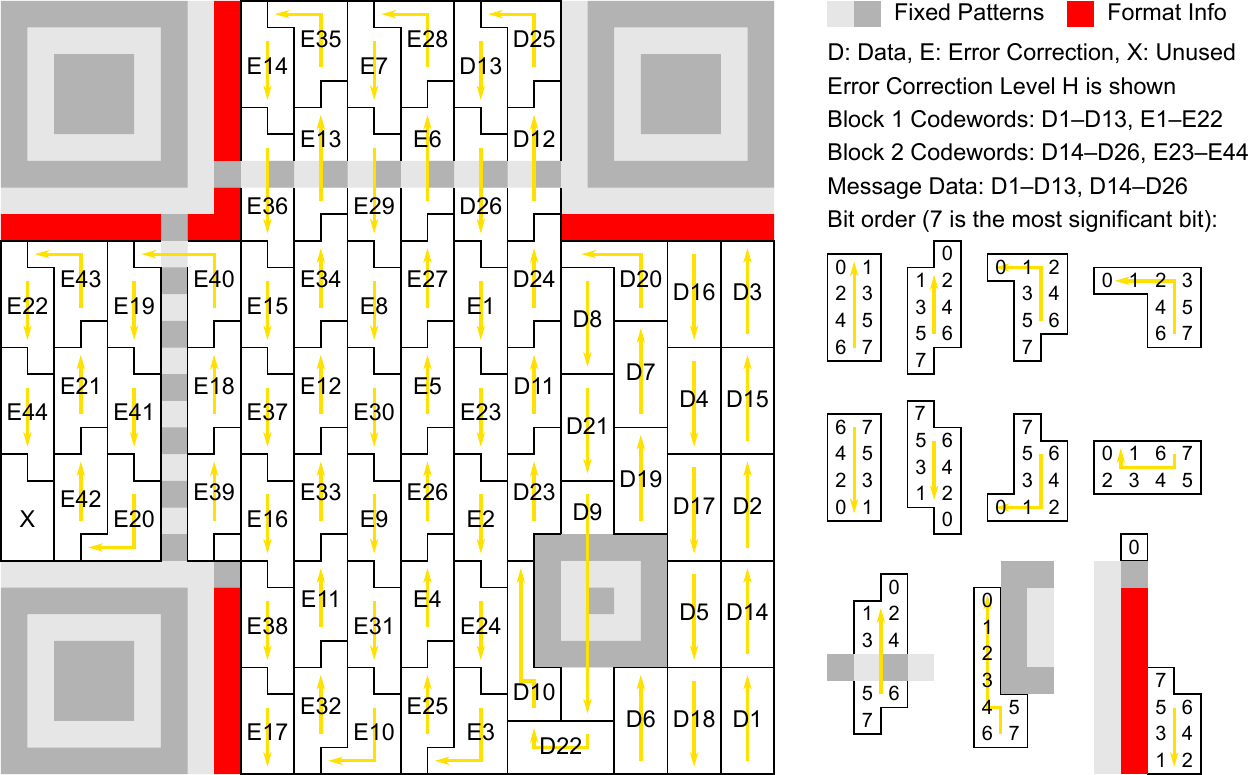}
  \caption{Arrangement of data in a QR code version 3. Image obtained from \cite{WikiQRCode}.}\label{fig:qrv3}
\end{figure}

In the above figure, we observe a more complex case, a QR code version 3, with error correction level H.
The message contains 26 bytes of data, of which 2 bytes are reserved for encoding format, message length, and final padding. Therefore, as indicated in Table \ref{table:qrcapacity}, we have 24 bytes of effective data.

\medskip

Furthermore, in version 3 with error correction level H, the information is divided into two blocks, hence two independent Reed-Solomon codes are used. The blocks are interleaved (the order of data arrangement is D1, D14, D2, D15, ...) to correct errors even when localized damage occurs in any area of the image.

\section{QR selective edition vulnerability}

As specified earlier, a QR code corrects a percentage of errors approximately depending on the error correction level. However, error correction in QR codes focuses on treating each byte as an 8-bit entity, meaning correction is performed at the byte level. This methodology provides acceptable robustness in data recovery under adverse conditions, such as visual defects in parts of the QR. It's important to note, though, that this error correction is not conducted at the bit level, which introduces certain limitations in the precision of error correction.

A vulnerability is observed, stemming from the possibility of making subtle adjustments at the bit level. While a QR code can correct a predefined percentage of errors, there exists a probability of strategically manipulating the code by modifying specific bit values without completely altering the error correction keys, thereby achieving a slightly different message. This has significant implications in commercial environments, especially those utilizing QR codes for facilitating financial transactions.

\subsection{Proof of Concept}\label{section:poc}
We will explore an example where a data modification in a QR affects the change and remains readable by a QR reader. Let's assume the original text string in the QR is ``Id: 1234567''. We use error correction level Q, meaning it can correct up to 25\% of erroneous data. This results in the usage of the code $RS(26,13)$.

\begin{figure}[H]
  \centering
  \includegraphics[width=0.25\textwidth]{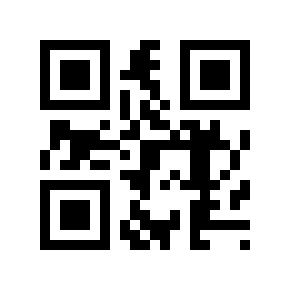}
  \caption{QR code with data ``Id: 1234567'' and error correction level Q.}\label{fig:qrprueba}
\end{figure}

The proposed text consists of 11 characters. Using error correction level Q, according to Table \ref{table:qrcapacity}, the most appropriate version to use is the first one. In this version, a total of 26 bytes of information are displayed, divided as follows:
\begin{itemize}
    \item 11 bytes of data to display from the text.
    \item 13 bytes related to error correction codes.
    \item 2 bytes of information about the encoding format and text length.
\end{itemize}

Using open-source libraries like \textit{qrcode} (see \cite{qrcode}) or \textit{qrcodegen} (see \cite{qrcodegen}), we can obtain both a byte string with the information to represent and the QR representation in an image format. As seen previously in Figure \ref{fig:qrformat}, the image representation of data uses a mask that depends on the data itself. However, since QR code reading does not consider the use of the most optimal mask but tries to decode the data regardless of the mask used, we don't need to deal with the QR image. Therefore, we will work with the data as a byte string.

\bigskip

Suppose we want to modify a character in the original text, changing from ``Id: 1234567'' to ``Id: 1234566''. The QR code in Figure \ref{fig:qrprueba} is generated with the following byte string:
\begin{align*}
b_1 =\text{ }&[64, 180, 150, 67, 162, 3, 19, 35, 51, 67, 83, 99, \mathbf{112}, \\
&\mathbf{196, 144, 22, 34, 115, 74, 89, 202, 212, 234, 197, 39, 150}]
\end{align*}
The byte string associated with the QR of the second text is:
\begin{align*}
b_2 =\text{ }&[64, 180, 150, 67, 162, 3, 19, 35, 51, 67, 83, 99, \mathbf{96}, \\
&\mathbf{188, 116, 128, 47, 172, 71, 62, 26, 14, 96, 156, 143, 69}]
\end{align*}

The first 13 elements represent the numerical encoding of the data. The last 13 elements are the error correction codes for this data. Both in \( b_1 \) and \( b_2 \), the elements that differ between the two byte arrays are highlighted in bold. This change between the byte arrays is due to the modified data byte, changing from ``7'' to ``6'', and the consequent adjustment of the data's error correction codes.

As this QR codes uses $RS(26,13)$, a QR code reader can correct \( \text{floor}(\frac{14-1}{2}) = 6 \) errors (see Theorem \ref{theorem:algoritmo}), because the minimum distance is $d = 26 - 13 + 1 = 14$. Therefore, if we want to modify string \( b_1 \) so that it interprets as string \( b_2 \), at least seven error correction codes must match those of \( b_2 \), in addition to the modified data byte.

At first glance, modifying eight bytes in a QR code might seem costly, as it equates to 64 bit-level modifications or, equivalently, 64 pixel changes in the image. However, if we represent \( b_1 \) and \( b_2 \) as binary numbers, we observe that the number of bits to modify decreases significantly:

\begin{table}[h!]
\centering
\caption{Binary representation of differing bytes of QR with texts ``Id: 1234567'' and ``Id: 1234566''.}
\setlength{\tabcolsep}{7pt}
\renewcommand{\arraystretch}{1.3}
\begin{center}
\begin{tabular}{c|c|c|c|c|c|}
\cline{2-6}
\multicolumn{1}{c|}{Index}           & 11       & 12       & 13       & 14       & 15       \\ \hline
\multicolumn{1}{|c|}{$b_1$}           & 01100011 & 01110000 & 11000100 & 10010000 & 00010110 \\ \hline
\multicolumn{1}{|c|}{$b_2$}           & 01100011 & 01100000 & 10111100 & 01110100 & 10000000 \\ \hline
\multicolumn{1}{|c|}{Bits Modified}     & 0        & 1        & 4        & 4        & 4        \\ \hline
\end{tabular}

\vspace*{2 mm}

\begin{tabular}{c|c|c|c|c|c|}
\cline{2-6}
\multicolumn{1}{c|}{Index}          & 16        & 17       & 18       & 19       & 20      \\ \hline
\multicolumn{1}{|c|}{$b_1$}          & 00100010  & 01110011 & 01001010 & 01011001 & 11001010\\ \hline
\multicolumn{1}{|c|}{$b_2$}          & 00101111  & 10101100 & 01000111 & 00111110 & 00011010\\ \hline
\multicolumn{1}{|c|}{Bits Modified}    & 3         & 7        & 3        & 5        & 3       \\ \hline
\end{tabular}

\vspace*{2 mm}

\begin{tabular}{c|c|c|c|c|c|}
\cline{2-6}
\multicolumn{1}{c|}{Index}          & 21        & 22       & 23       & 24       & 25       \\ \hline
\multicolumn{1}{|c|}{$b_1$}          & 11010100  & 11101010 & 11000101 & 00100111 & 10010110 \\ \hline
\multicolumn{1}{|c|}{$b_2$}          & 00001110  & 01100000 & 10011100 & 10001111 & 01000101 \\ \hline
\multicolumn{1}{|c|}{Bits Modified}    & 5         & 3        & 4        & 3        & 5        \\ \hline
\end{tabular}
\end{center}
\label{table:comparison}
\end{table}

A total of eight bytes need to be modified, so we can choose those requiring the fewest bit changes. In this case, we select bytes with indices $12, 13, 14, 16, 18, 20, 22, 24$. The total number of bit-level modifications to alter the message decreases from 64 to 24.

\smallskip

Furthermore, the total number of encoded bits in this QR code is $26 * 8 = 208$, hence the percentage of modified bits is $11.54\%$. This percentage is significantly lower than the error data rate detectable by a QR code with Q level error correction, which is $50\%$. It is even lower than the error correction capability of such a QR code, which is $25\%$.

\smallskip

This data may seem contradictory to the error correction capabilities advertised by Denso-Wave. However, it's important to note that the error correction rate refers to the number of codewords arranged in the QR code, i.e., the number of bytes. Error correction operates at the byte level, and with the method described earlier, we are modifying 8 out of the total 26 bytes, which means we are altering $30.77\%$ of the data, no longer correctable by a QR code reader.

\begin{figure}[H]
  \centering
  \includegraphics[width=0.8\textwidth]{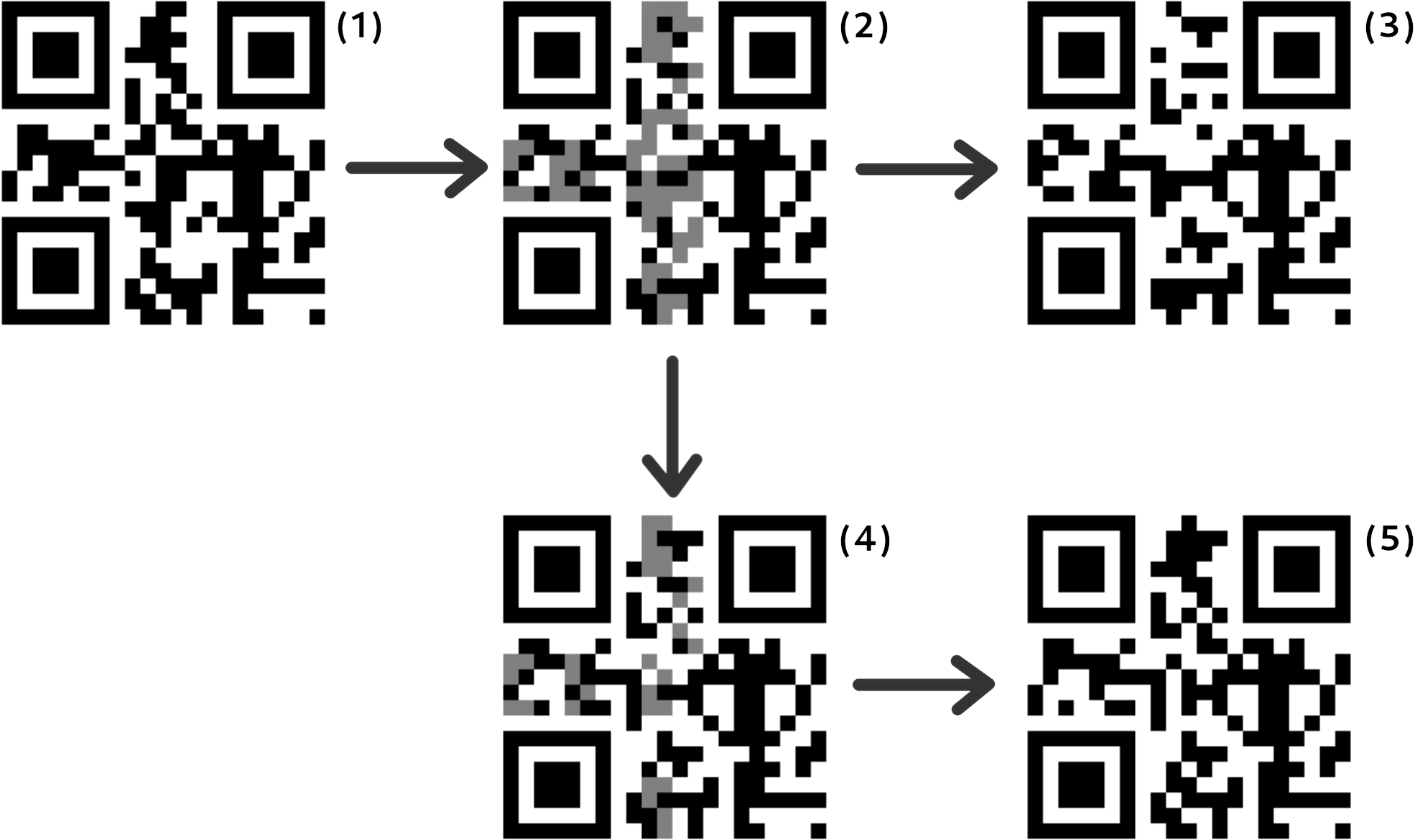}
  \caption{Diagram of the QR code manipulation process.}
  \label{fig:qrproceso}
\end{figure}

In Figure \ref{fig:qrproceso}, we observe the conversion from the original QR code, marked as $(1)$, to the QR code with the last character modified, marked as $(3)$. In the intermediate step $(2)$, all necessary changes to the original QR code are shaded in gray. Using the selective alteration method described earlier, we can modify only the necessary bits to misinterpret the data. Thus, we obtain QR code $(4)$, which is the same as step $(2)$ with selective modifications applied. By reverting the gray pixels to their original state, we obtain QR code $(5)$, providing the same result as QR code $(3)$.

\subsection{Extreme Case}

It may be of interest to compute the QR code closest to another given QR code. In this context, closeness to another QR code refers to minimizing the bit-level changes as explained in Section \ref{section:poc}.

\smallskip

As mentioned in Section \ref{section:separacionbloques}, a QR code encodes information in blocks, each with its own error correction codes. This means that the information contained within a QR code is organized into independent blocks.

\smallskip

When considering changing the information in a QR code by modifying just one character, we can leverage this block structure. Specifically, if a QR code consists of multiple blocks, changing a single character in the code involves making adjustments within only one of those blocks.

To simplify our study and analysis, we focus on the scenario where the QR code consists of a single block. In this scenario, any change in the encoded information will require adjustments within this single block, providing a more direct understanding of how local modifications affect the structure and content of the QR code.

\subsubsection{Specific Example 1}\label{subsection:ejemploparticular}

Let's assume the QR code we want to modify uses error correction level L, which is the lowest, capable of correcting up to 7\% of errors in the QR code. This level of correction allows for 17 bytes of text and generates only 7 error correction codes. Thus, we are considering the code $RS(26,19)$.

Suppose the encoded text in our QR code is ``Id: bhavuksikka'', which is composed of 15 characters (or equivalently bytes). Following the notation used in Section \ref{section:poc}, the initial string is generated by the following byte array:
\begin{align*}
b_1 =\text{ }&[64, 244, 150, 67, 162, 6, 38, 134, 23, 103, 86, 183, 54, 150, 182, 182, 16, 236, 17,\\
&235, 223, 145, 221, 73, 238, 102]
\end{align*}

Using a brute-force algorithm, we can find another text string that minimizes the number of bit-level changes. In this case, we find that the minimizing text is ``Id: bhavYksikka'', changing the letter in position 8 from ``u'' to ``Y''. The resulting byte array is:
\begin{align*}
b_1 =\text{ }&[64, 244, 150, 67, 162, 6, 38, 134, 23, 101, 150, 183, 54, 150, 182, 182, 16, 236, 17,\\
&234, 95, 209, 74, 163, 175, 89]
\end{align*}

The comparison table between both strings is as follows:

\begin{table}[h!]
\centering
\caption{Binary representation of differing bytes of QR with texts ``Id: bhavuksikka'' and ``Id: bhavYksikka''.}
\setlength{\tabcolsep}{7pt}
\renewcommand{\arraystretch}{1.3}
\begin{center}
\begin{tabular}{c|c|c|c|c|c|}
\cline{2-6}
\multicolumn{1}{c|}{Index}           & 6       & 7       & 8       & 9       & 10       \\ \hline
\multicolumn{1}{|c|}{$b_1$}           & 00100110 & 10000110 & 00010111 & 01100111 &01010110  \\ \hline
\multicolumn{1}{|c|}{$b_2$}           & 00100110 & 10000110 & 00010111 & 01100101 &10010110  \\ \hline
\multicolumn{1}{|c|}{Bits Modified}     & 0        & 0        & 0        & 1        & 2        \\ \hline
\end{tabular}

\vspace*{2 mm}
\begin{tabular}{c|c|c|c|c|c|}
\cline{2-6}
\multicolumn{1}{c|}{Index}          & 16        & 17       & 18       & 19       & 20      \\ \hline
\multicolumn{1}{|c|}{$b_1$}          & 00010000 & 00010000 & 00010000 & 11101011 & 11011111 \\ \hline
\multicolumn{1}{|c|}{$b_2$}          & 00010000 & 11101100 & 00010001 & 11101010 & 01011111 \\ \hline
\multicolumn{1}{|c|}{Bits Modified}    & 0         & 0        & 0        & 1        & 1       \\ \hline
\end{tabular}

\vspace*{2 mm}

\begin{tabular}{c|c|c|c|c|c|}
\cline{2-6}
\multicolumn{1}{c|}{Index}          & 21        & 22       & 23       & 24       & 25       \\ \hline
\multicolumn{1}{|c|}{$b_1$}          & 10010001 & 11011101 & 01001001 & 11101110 & 01100110 \\ \hline
\multicolumn{1}{|c|}{$b_2$}          & 11010001  & 01001010 & 10100011 & 10101111 & 01011001 \\ \hline
\multicolumn{1}{|c|}{Bits Modified}    & 1         & 5        & 5        & 2        & 6        \\ \hline
\end{tabular}
\end{center}
\label{table:comparacion2}
\end{table}

Analyzing Table \ref{table:comparacion2}, we find that we need to make changes in two data bytes corresponding to indices 9 and 10. Additionally, since we have seven error correction codes, we need to fix at least four to evade error correction. Therefore, we select those involving the fewest changes, which are indices 19, 20, 21, and 24. In total, the number of necessary bit-level modifications is eight. The percentage of bits that needs to be changed in the QR code is $\frac{8}{208} * 100 = 3.85\%$.

\begin{figure}[H]
  \centering
  \includegraphics[width=0.8\textwidth]{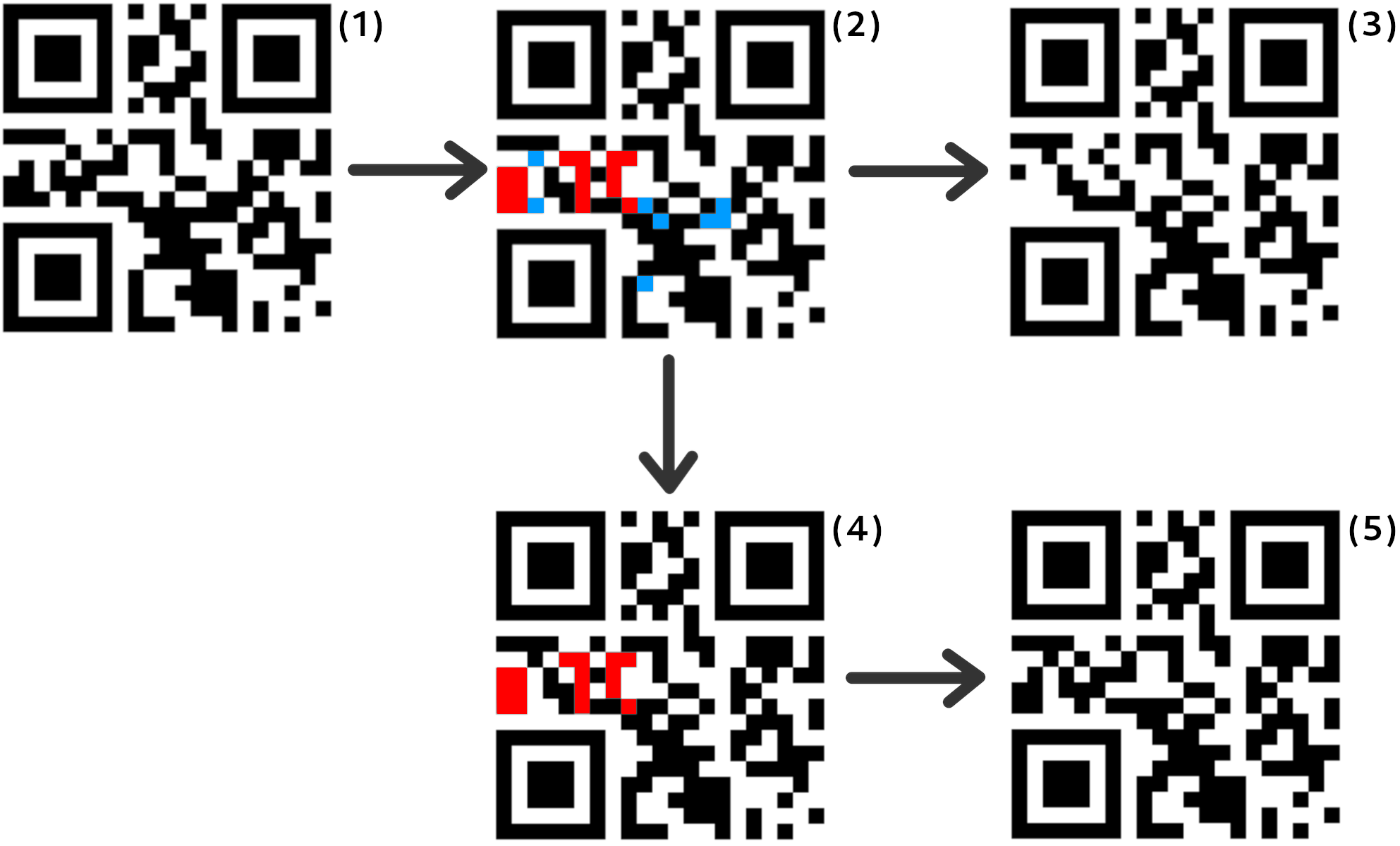}
  \caption{Diagram of the QR code manipulation process in an extreme case.}
  \label{fig:qrproceso2}
\end{figure}

In Figure \ref{fig:qrproceso2}, we observe the same process as in Figure \ref{fig:qrproceso}, modifying the initial and final data to correspond to ``Id: bhavuksikka'' and ``Id: bhavYksikka'' respectively. Here we have marked in red the pixels that don't need changing, and in blue the pixels which need their color flipped.

\subsubsection{Specific Example 2}\label{subseq:example2}

Let's take a look at another example, where the text we want to modify is composed of exactly 17 characters: ``Some binary text.''. Following the same steps as in the previous section, we end up with two possible texts which are the nearest. These are:
\begin{itemize}
    \item Modification in position 14, going from character ``x'' to ``y''. The resulting text is ``Some binary teyt.''.
    \item Modification in position 15, going from character ``t'' to ``4''. The resulting text is ``Some binary tex4.''.
\end{itemize}

Both of these texts require only 7 bit modifications.

\begin{figure}
  \centering
  \includegraphics[width=0.5\textwidth]{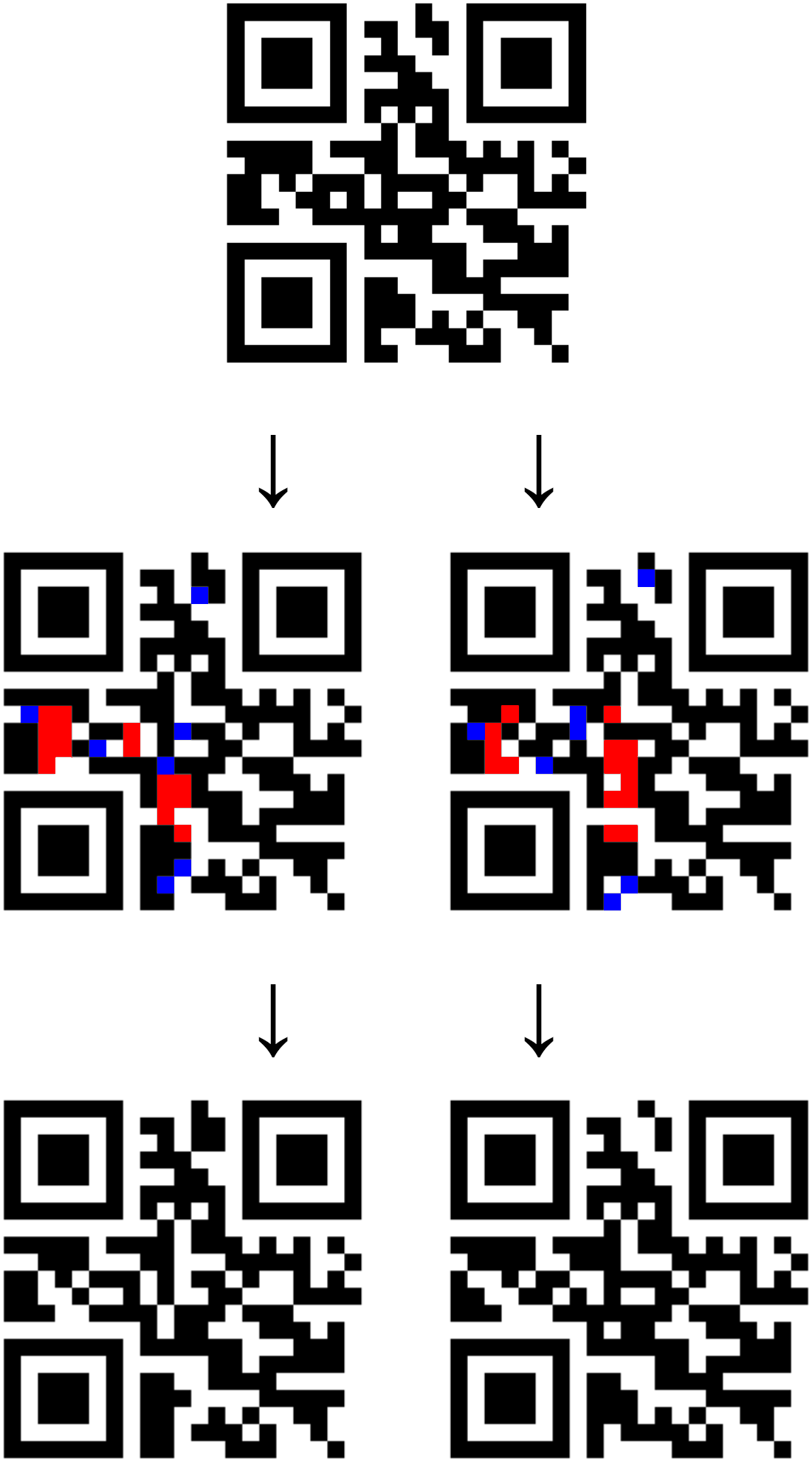}
  \caption{Diagram of the QR code manipulation with two possible outcomes with 7 pixel modifications.}
  \label{fig:qrproceso3}
\end{figure}

In Figure \ref{fig:qrproceso3} we see the flow for the current example. On the left side, we end up with the text ``Some binary teyt.'', and on the right side ``Some binary tex4.''. Following the same color coding as before, blue indicates the pixels that need color changing, and red indicates the pixels that remain the same.

\subsubsection{Generalization}

After numerous tests under the same scenario (using QR version 1, error correction L and data encoded in byte mode), we arrived at the same result regarding the minimum distance to another QR code as in Section \ref{subseq:example2}, which is that there are two possible texts that require 7 modifications to get to.

Additionally, the comparison table of the byte strings between the original QR code and the minimizing QR code is similar to Table \ref{table:comparacion2}, in the sense that the distances at each index are the same; only the values of $b_1$ and $b_2$ change.
\begin{remark}[]
    We have also observed that the closest text is always XORed by a fixed small set of bytes at the same position.
\end{remark}
Following the example in Section \ref{subseq:example2}, the modification at position 14 needs to apply the XOR operation with byte \texttt{0x01}. This means that any QR code using version 1, error correcting L and composed of 15 or more bytes (up to 17) can be modified with only 7 pixel modifications and the modified QR will have the byte at position 14 XORed with byte \texttt{0x01}. 
On the other hand, the modification at position 15 needs to apply the XOR operation with byte \texttt{0x40} or \texttt{0x80}, depending on the original byte at position 15.

As a result of this analysis, we conjecture that fixing the QR version, error correction level and data encoding, the number of bit modifications required to go do from a QR code to the closest QR code is constant.

\begin{table}[H]
\centering
\caption{Number of modifications required to modify the information on a QR code version 1}
\setlength{\tabcolsep}{7pt}
\renewcommand{\arraystretch}{1.3}
\begin{tabular}{|c|c|c|c|c|}
\hline
Error Correcting & Total byte storage & Byte position                                         & Byte XORed                                                        & Bit flips needed \\ \hline
L                & 17                 & \begin{tabular}[c]{@{}c@{}}14\\ 15\end{tabular}       & \begin{tabular}[c]{@{}c@{}}0x01\\ \texttt{0x40}, \texttt{0x80}\end{tabular}               & 7                \\ \hline
M                & 14                 & 1                                                     & \texttt{0x4B}                                                              & 9               \\ \hline
Q                & 11                 & 2                                                     & \texttt{0x0C}                                                              & 14               \\ \hline
H                & 7                  & \begin{tabular}[c]{@{}c@{}}0\\ 3\\ 4\\ 5\end{tabular} & \begin{tabular}[c]{@{}c@{}}\texttt{0x12}, \texttt{0x24}\\ \texttt{0x14}\\ \texttt{0x5B}\\ \texttt{0x61}, \texttt{0xC2}\end{tabular} & 20               \\ \hline
\end{tabular}
\label{table:modif1}
\end{table}

\begin{table}[H]
\centering
\caption{Number of modifications required to modify the information on a QR code version 2}
\setlength{\tabcolsep}{7pt}
\renewcommand{\arraystretch}{1.3}
\begin{tabular}{|c|c|c|c|c|}
\hline
Error Correcting & Total byte storage & Byte position                                  & Byte XORed                                          & Bit flips needed \\ \hline
L                & 32                 & \begin{tabular}[c]{@{}c@{}}3\\ 19\end{tabular} & \begin{tabular}[c]{@{}c@{}}\texttt{0x26}, \texttt{0x4C}\\ \texttt{0x4B}\end{tabular}                                                & 9               \\ \hline
M                & 26                 & 0                                              & \texttt{0x54}, \texttt{0xA8}                                                & 15               \\ \hline
Q                & 20                 & \begin{tabular}[c]{@{}c@{}}2 \\ 6\\ 11\end{tabular} & \begin{tabular}[c]{@{}c@{}}\texttt{0x04} \\\texttt{0x01}\\ \texttt{0x41}\end{tabular} & 23               \\ \hline
H                & 14                 & 4                                              & \texttt{0x88}                                                & 30               \\ \hline
\end{tabular}
\label{table:modif2}
\end{table}

\begin{table}[H]
\centering
\caption{Number of modifications required to modify the information on a QR code version 3}
\setlength{\tabcolsep}{7pt}
\renewcommand{\arraystretch}{1.3}
\begin{tabular}{|c|c|c|c|c|}
\hline
Error Correcting & Total byte storage & Byte position                                  & Byte XORed                                          & Bit flips needed \\ \hline
L                & 53                 & 29                                             & \texttt{0x12}, \texttt{0x24}                                          & 15               \\ \hline
M                & 42                 & \begin{tabular}[c]{@{}c@{}}3\\ 15\end{tabular} & \begin{tabular}[c]{@{}c@{}}\texttt{0xC3}\\ \texttt{0x1F}\end{tabular} & 26               \\ \hline
Q                & 32                 & 3                                              & \texttt{0x9A}                                                & 18               \\ \hline
H                & 24                 & 2                                              & \texttt{0x41}                                                & 23               \\ \hline
\end{tabular}
\label{table:modif3}
\end{table}

In Tables \ref{table:modif1}, \ref{table:modif2} and \ref{table:modif3}, we see the number of modifications required to change a QR code, along with information on what modification would be applied to the resulting text in the QR code.
It's worth mentioning that in Tables \ref{table:modif1} and \ref{table:modif2} there is only one block containing all information for any error correcting level. But in Table \ref{table:modif3} only error correction L and M have a single block; error correction Q and H contain two distinct and independent blocks. In this case, as there are smaller blocks, the total number of modifications is also smaller comparing to the same error correction level in the previous version, in Table \ref{table:modif2}.

\begin{table}[H]
\centering
\caption{Number of modifications required to modify the information on a QR code version 4}
\setlength{\tabcolsep}{7pt}
\renewcommand{\arraystretch}{1.3}
\begin{tabular}{|c|c|c|c|c|}
\hline
Error Correcting & Total byte storage & Byte position                                  & Byte XORed                                          & Bit flips needed \\ \hline
L                & 78                 & 9                                              & \texttt{0x2E}                                       & 19                \\ \hline
M                & 62                 & \begin{tabular}[c]{@{}c@{}}6\\ 9\end{tabular} & \begin{tabular}[c]{@{}c@{}}\texttt{0x15}, \texttt{0x2A}\\ \texttt{0x51}\end{tabular} & 17               \\ \hline
Q                & 46                 & 18                                             & \texttt{0x04}                                       & 28               \\ \hline
H                & 34                 & \begin{tabular}[c]{@{}c@{}}3\\ 4\end{tabular} & \begin{tabular}[c]{@{}c@{}}\texttt{0x10}\\ \texttt{0x01}\end{tabular} & 17               \\ \hline
\end{tabular}
\label{table:modif4}
\end{table}

\begin{table}[H]
\centering
\caption{Number of modifications required to modify the information on a QR code version 5}
\setlength{\tabcolsep}{7pt}
\renewcommand{\arraystretch}{1.3}
\begin{tabular}{|c|c|c|c|c|}
\hline
Error Correcting & Total byte storage & Byte position                                  & Byte XORed                                          & Bit flips needed \\ \hline
L                & 106                 & 4 & \texttt{0x0C}                                            & 25               \\ \hline
M                & 84                 & \begin{tabular}[c]{@{}c@{}}12\\ 13\\ 40\end{tabular} & \begin{tabular}[c]{@{}c@{}}\texttt{0x9C}\\ \texttt{0x0E}\\ \texttt{0x03}\end{tabular} & 24               \\ \hline
Q                & 60                 & 1                                              & \texttt{0x9A}                                                & 18               \\ \hline
H                & 44                 & 0                                              & \texttt{0x41}                                                 & 23               \\ \hline
\end{tabular}
\label{table:modif5}
\end{table}

\begin{table}[H]
\centering
\caption{Number of modifications required to modify the information on a QR code version 6}
\setlength{\tabcolsep}{7pt}
\renewcommand{\arraystretch}{1.3}
\begin{tabular}{|c|c|c|c|c|}
\hline
Error Correcting & Total byte storage & Byte position                                  & Byte XORed                                          & Bit flips needed \\ \hline
L                & 134                   & \begin{tabular}[c]{@{}c@{}}23\\ 29\\ 30\\ 35\\ 42\\ 45\end{tabular} & \begin{tabular}[c]{@{}c@{}}\texttt{0x48}\\ \texttt{0x11}\\ \texttt{0x82}\\ \texttt{0x91}\\ \texttt{0x15}, \texttt{0x2A}\\ \texttt{0x51}\end{tabular} & 17               \\ \hline
M                & 106                   & \begin{tabular}[c]{@{}c@{}}4\\ 12\end{tabular} & \begin{tabular}[c]{@{}c@{}}\texttt{0x48, 0x4C, 0x24}\\ \texttt{0xB8}\end{tabular} & 16               \\ \hline
Q                & 74                   & 16                                             & \texttt{0x03}                                                & 24               \\ \hline
H                & 58                   & 3                                              & \texttt{0x88}                                                & 30               \\ \hline
\end{tabular}
\label{table:modif6}
\end{table}

On tables \ref{table:modif4}, \ref{table:modif5} and \ref{table:modif6} we have the same analysis applied to QR codes versions 4, 5 and 6 respectively.

\begin{table}[H]
\centering
\caption{Number of modifications required to modify the information on a QR code version 40}
\setlength{\tabcolsep}{7pt}
\renewcommand{\arraystretch}{1.3}
\begin{tabular}{|c|c|c|c|c|}
\hline
Error Correcting & Total byte storage & Byte position                                   & Byte XORed                                          & Bit flips needed \\ \hline
L                & 2953               & 51                                              & \texttt{0x91}                                                & 28               \\ \hline
M                & 2331               & 34                                              & \texttt{0x88}                                                & 30               \\ \hline
Q                & 1663               & \begin{tabular}[c]{@{}c@{}}16\\ 20\end{tabular} & \begin{tabular}[c]{@{}c@{}}\texttt{0x71}\\ \texttt{0xF8}\end{tabular} & 32               \\ \hline
H                & 1273               & \begin{tabular}[c]{@{}c@{}}7\\ 11\end{tabular}  & \begin{tabular}[c]{@{}c@{}}\texttt{0x71}\\ \texttt{0xF8}\end{tabular} & 32               \\ \hline
\end{tabular}
\label{table:modif40}
\end{table}

As we know, with bigger QR codes, the information to be shown is separated in different blocks in order to limit the complexity of encoding and decoding. In fact, the limit imposed is that a block can only have up to 30 error correcting codes. This limitation results in an upper bound on the number of bit flips required to change the information on a QR code. In Table \ref{table:modif40} we see the number of bit flips needed to change the information using the highest QR code version, setting the upper bound to 32 bit flips.

In version 40, block sizes vary depending on the error correcting level as follows:
\begin{itemize}
    \item Low: 
    \begin{itemize}
        \item 19 blocks with 118 bytes of data and 30 bytes of error codes
        \item 6 blocks with 119 bytes of data and 30 bytes of error codes.
    \end{itemize}
        \item Medium: 
    \begin{itemize}
        \item 18 blocks with 47 bytes of data and 28 bytes of error codes
        \item 31 blocks with 47 bytes of data and 28 bytes of error codes.
    \end{itemize}
        \item Quartile: 
    \begin{itemize}
        \item 34 blocks with 24 bytes of data and 30 bytes of error codes
        \item 34 blocks with 25 bytes of data and 30 bytes of error codes.
    \end{itemize}
        \item High: 
    \begin{itemize}
        \item 20 blocks with 15 bytes of data and 30 bytes of error codes
        \item 61 blocks with 16 bytes of data and 30 bytes of error codes.
    \end{itemize}
\end{itemize}

\section{Conclusions}
In this work, we have conducted a thorough analysis of the underlying mathematical construction in the generation of QR codes, addressing multiple key aspects that influence their operation and security.

Firstly, we have explored the fundamental theoretical principles that underpin the creation of QR codes, focusing on their modular structure and the algebraic properties that define them.

Results have shown a vulnerability of QR codes to selective data editing, which can be problematic in contexts where information integrity is essential, such as in QR code-based payment methods. This study indicates that any QR code using byte mode encoding can have one byte modified with no more than 32 pixel changes. In the worst case scenario, only 7 pixel changes result in a modification in a QR code in version 1 and error correction L.

The code used for the analysis can be found at \url{https://github.com/Bubbasm/tfgmates}.

\subsection{Improvements proposed}

A proposed improvement is to use error correction level H, requiring many modifications in order to change the data. This approach has the caveat of needing very big QR codes to represent any information, and even then, the amount of pixel modifications tops out to 32 in this case.

Another approach would be to implement a system where the QR reader has the capability to verify if the mask used in the QR code is the most optimal for ensuring readability and data integrity. Nonetheless, this would imply a higher computational cost for the decoder, which is not always feasible. Moreover, there may be cases of QR modification that do not use a different mask, so this would not be a good solution.

Additionally, it is important to note that these proposed improvements cannot solve the underlying fundamental problem. The true challenge lies in the intrinsic nature of how a QR code is created, where words are formed using 8 bits. This allows each word to be broken down into individual bits, which simplifies the process of selective modification and the introduction of minimal changes that can go unnoticed by detection systems.

One possible solution would be to use bit-level error correction for the entire message. This is already done in the format bits found in a QR code, as we saw in section \ref{qrstages}.
However, the encoding and decoding of these codes would be considerably more costly, as we would be dealing with much larger polynomial divisions than the simplified versions introduced in example \ref{ejemp:algoritmocodif}.

Consequently, it is essential to consider more comprehensive and sophisticated approaches to address these security issues in QR codes, recognizing the inherent limitations of their fundamental design and structure.

\bibliographystyle{unsrtnat}
\bibliography{references}

\end{document}